\newtheorem{theorem}{Theorem}
\newtheorem{lemma}{Lemma}
\newtheorem{definition}{Definition}
\newtheorem{corollary}{Corollary}
\newtheorem{observation}{Observation}
\title{Simple Recognition of Halin Graphs and Their Generalizations}
\author{David Eppstein\thanks{Computer Science Department, University of California, Irvine; Irvine, CA, USA. This material is based upon work supported by the National Science Foundation under Grant CCF-1228639 and by the Office of Naval Research under Grant No. N00014-08-1-1015.}}
\date{ }
\begin{document}
\maketitle

\begin{abstract}
We describe and implement two local reduction rules that can be used to recognize Halin graphs in linear time, avoiding the complicated planarity testing step of previous linear time Halin graph recognition algorithms. The same two rules can be used as the basis for linear-time algorithms for other algorithmic problems on Halin graphs, including decomposing these graphs into a tree and a cycle, finding a Hamiltonian cycle, or constructing a planar embedding.
These reduction rules can also be used to recognize a broader class of polyhedral graphs. These graphs, which we call the D3-reducible graphs, are the dual graphs of the polyhedra formed by gluing pyramids together on their triangular faces; their treewidth is bounded, and they necessarily have Lombardi drawings. 
\end{abstract}

\section{Introduction}

\emph{Halin graphs} are the graphs that can be formed from a tree with no degree-two vertices, embedded in the plane, by adding a cycle of edges connecting the leaf vertices of the tree in the cyclic order given by the embedding~\cite{Hal-CMA-71}. They are necessarily 3-connected  and planar, with a unique planar embedding up to the choice of the outer face of the embedding; we will adopt the convention that this outer face is the leaf cycle.
Unlike planar graphs more generally, Halin graphs have bounded treewidth (at most three), allowing problems such as the maximum independent set problem which are NP-hard on planar graphs to be solved in polynomial time on Halin graphs~\cite{Bod-ICALP-88}.

Two algorithms for recognizing Halin graphs in linear time are known. Sys{\l}o and Proskurowski~\cite{SysPro-GT-83} showed that a graph with $n$ vertices and $m$ edges is Halin if and only if it is planar and 3-connected, and has a face with exactly $m-n+1$ vertices and edges. All of these conditions can be checked in linear time. Fomin and Thilikos~\cite{FomThi-JDA-06} instead observed that in a Halin graph, the outer face has at least $n/2+1$ vertices, and that any planar graph can have at most four such faces. They proposed a recognition algorithm that constructs an (arbitrary) planar embedding, and tests for each large face whether its vertices all have degree three and whether removing the edges of the face from the graph leaves a tree with no degree-two vertices. Because there are only a constant number of faces to test, all steps of this algorithm can be performed in linear time. However, both of these algorithms use planarity testing, a problem whose many known linear-time algorithms~\cite{HopTar-JACM-74,BooLue-JCSS-76,ChiNisAbe-JCSS-85,ShiHsu-TCS-99,BoyMyr-JGAA-04,FraOssRos-IJFCS-06,Sch-MFCS-13} are complex and hard to implement. Linear-time 3-connectivity testing, also, has complex algorithms that have proven treacherous to implementors~\cite{HopTar-SJC-73,GutMut-GD-00}.

It would also be possible to base a linear time recognition algorithm for Halin graphs on Courcelle's theorem, which states that the monadic second-order logic of graphs has efficient decision algorithms for graphs of bounded treewidth~\cite{Cou-IC-90}. The existence of a decomposition of the edges of a given graph into a tree and a cycle through the leaves of the tree is straightforward to express in second-order logic. Expressing the correct ordering of the cycle with respect to the planar embedding of the tree is not as straightforward, but can be expressed logically as the statement that every subtree of the tree contacts a contiguous subpath of the cycle. Thus, to test whether a graph is Halin, one can construct a width-three tree-decomposition~\cite{Bod-SJC-96} and then check whether these logical expressions are valid for the decomposition. Such methods are again unlikely to lead to simple, practical, and implementable algorithms, because of the high constant factors resulting from the use of Courcelle's theorem. However, they could be used to recognize Halin graphs in logarithmic space~\cite{ElbJakTan-FOCS-10}.

An alternative approach that has proven successful for many other computational problems on graphs of bounded treewidth involves the notion of a \emph{reduction algorithm}, an algorithm that gradually shrinks the size of the input graph by applying \emph{reduction rules} based on local structures within a given graph~\cite{Flu-PhD-97}. If the reduction rules are chosen to be safe (preserving the property to be tested), complete (applicable to any large enough graph with the property), and terminating (always reducing some appropriate size function of the graphs they operate on), then the graphs with the property can be recognized by repeatedly applying reductions until no more can be found to apply, and then testing whether the remaining smaller graph belongs to a finite set of base cases. Such algorithms have been found for many specific graph classes~\cite{Duf-JMAA-65,ValTarLaw-SJC-82,ArnPro-SJADM-86,BodThi-Algs-99} and more generally are known to exist for all graph classes of bounded treewidth that can be recognized using Courcelle's theorem~\cite{ArnCouPro-JACM-93,Flu-PhD-97}. Thus, in particular, a reduction algorithm exists for recognizing Halin graphs. However, although (unsafe) reduction rules for Halin graphs have been used to solve the Steiner tree and edge-constrained Hamiltonian cycle problems in these graphs~\cite{SkoSys-ZM-87,Win-DAM-87}, to our knowledge, no explicit reduction algorithm for recognizing Halin graphs has been described.

Motivated by these considerations, we describe in this paper two simple reduction rules for Halin graphs that are safe, complete, and terminating, and that can be used to recognize Halin graphs in linear time. Our rules involve augmenting the vertices of the graph with a single additional bit of information, which we can interpret as a color of a vertex, black or white. We use these vertex colors to allow or disallow certain reductions, and then recolor certain vertices after each reduction.
The same two rules can be used as the basis for linear-time algorithms for other algorithmic problems on Halin graphs, including decomposing these graphs into a tree and a cycle, finding a Hamiltonian cycle, or constructing a planar embedding.

It is natural to consider the graphs obtained by simplifying these rules even further by leaving the graph vertices uncolored and allowing all reductions. We call the class of graphs that are recognized by the uncolored version of our two reduction rules the \emph{D3-reducible graphs} because the preconditions for both reduction rules involve triples of degree three vertices. As we show, the D3-reducible graphs are a generalization of the Halin graphs that, like the Halin graphs themselves, are automatically planar and 3-vertex-connected. Thus, by Steinitz's theorem~\cite{Ste-EMW-22}, they are the graphs of polyhedra, and we characterize the D3-reducible graphs geometrically as the dual graphs of the polyhedra that can be constructed by gluing together pyramids on their triangular faces. Additionally, we show that the D3-reducible graphs have treewidth at most four, and that they necessarily have planar \emph{Lombardi drawings}, drawings in which the edges are represented by circular arcs that meet at equal angles at each vertex. Planar Lombardi drawings were previously known to exist for Halin graphs and for planar graphs of maximum degree three~\cite{DunEppGoo-JGAA-12,Epp-DCG-14}, but beyond these classes their existence is somewhat mysterious; we do not even know whether they exist for all outerplanar graphs~\cite{LofNol-GD-12}.

\section{D3 reductions}

If $T$ is a tree with four or more vertices, none of degree two, it can be reduced to $K_{1,3}$ by reduction steps that either remove the two leaf children from a vertex of degree three or remove a leaf from a vertex of degree greater than three. Taking into account the cycle edges added to such a tree to form a Halin graph gives us the following two reduction rules:

\begin{description}
\item[D3a.] Let $p$, $q$, and $r$ be three degree-three vertices that induce a triangle in the given graph $G$, and whose neighbors outside the set $\{p,q,r\}$ are all distinct. Replace these three vertices by a single vertex with the same three outside neighbors. (\autoref{fig:reductions}, left.)
\item[D3b.] Let $p$, $q$, and $r$ be three degree-three vertices that induce a path, with $q$ as the middle vertex, and suppose additionally that there is a single vertex $s$ adjacent to all three of $p$, $q$, and $r$.
Delete $q$ from the graph and replace it by a new edge from $p$ to $r$. (\autoref{fig:reductions}, right.) We refer to $s$ as the \emph{apex} of the reduction and $q$ as the \emph{middle vertex} of the reduction.
\end{description}

\begin{figure}[t]
\centering
\includegraphics[scale=0.35]{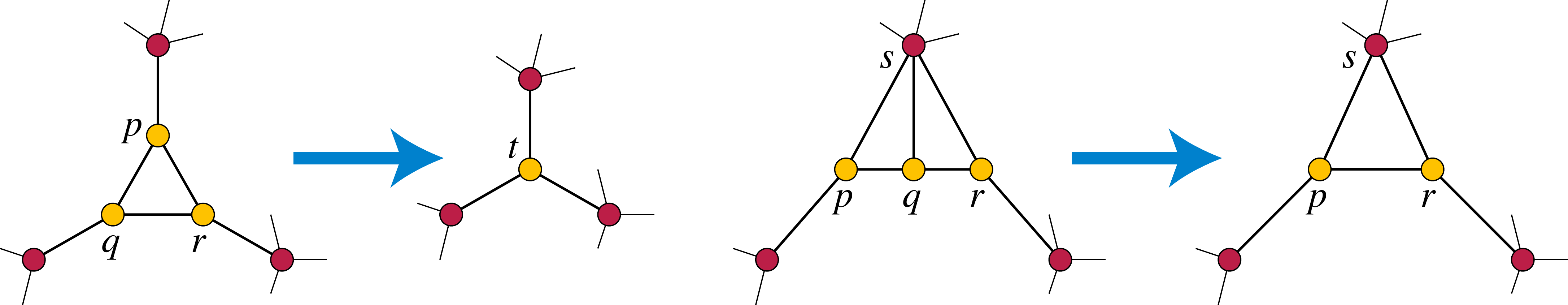}
\caption{The two D3 reductions. Left: three degree-three vertices $p$, $q$, and $r$ form a triangle with three distinct neighbors, and are collapsed to a single vertex $t$. Right: three degree-three vertices $p$, $q$, and $r$ form a path with one shared neighbor $s$, and are contracted to a two-vertex path.}
\label{fig:reductions}
\end{figure}

We collectively refer to rules D3a and D3b as the D3 reductions.

\begin{definition}
We define a \emph{D3-reducible graph} to be a graph that can be reduced to the four-vertex complete graph $K_4$ by a sequence of D3 reductions. We define a graph to be \emph{irreducible} if no additional D3 reductions can be applied to it.
\end{definition}

For instance, $K_4$ is irreducible, because all triples of its degree-three vertices induce triangles but do not have three distinct neighbors outside of each of these triangles.
As we now show, it will not be necessary to search for a special reduction sequence that leads to $K_4$ in order to recognize these graphs: all reduction sequences lead to isomorphic graphs.

\begin{lemma}
\label{lem:interchangeable}
Let $G$ be any graph, and $X$ and $Y$ be two D3 reductions that are both applicable in $G$.
Then either $X$ and $Y$ may both be applied independently (in either order) or the result of performing $X$ is isomorphic to the result of performing $Y$.
\end{lemma}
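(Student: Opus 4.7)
The plan is to proceed by case analysis on the intersection of the two reductions' interior vertex sets, namely $\{p,q,r\}$ for D3a and $\{p,q,r,s\}$ for D3b. Each reduction is a local modification specified by a small labeled configuration, and all of its ``interior'' vertices $p,q,r$ are required to have degree exactly three, so the combinatorial possibilities for overlap are tightly constrained.

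The easy case is when the interior vertex sets of $X$ and $Y$ are disjoint. Here I would verify that neither reduction alters the preconditions of the other: in D3a, each outside neighbor swaps one edge to a triangle vertex for one edge to the new merged vertex, preserving its degree; in D3b, the path endpoints $p$ and $r$ trade an edge to $q$ for an edge to each other, again preserving their degrees, while only the apex's degree (which carries no constraint) decreases. All adjacencies needed by the other reduction survive, so $X$ and $Y$ can be applied in either order and yield the same graph.

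The bulk of the work is to enumerate the overlaps that remain when the interiors do share a vertex, and to rule most of them out by combining the degree-three constraint with the distinctness requirement of D3a and the induced-path requirement of D3b. Two D3a triangles sharing a single interior vertex would give it four distinct neighbors, which exceeds degree three; sharing an edge would force the ``outside neighbor'' slots of the two endpoints of that edge to coincide, violating D3a's distinctness. For D3a against D3b, placing any of $\{p,q,r,s\}$ from the D3b configuration into a D3a triangle similarly forces a required D3b adjacency to run through a shared outside-neighbor slot of two triangle vertices, again violating distinctness. For two D3b reductions with a common middle vertex $q$, the three-element neighborhood of $q$ together with the induced-path requirement fixes which vertex is the apex, so both reductions coincide.

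In the small number of genuinely surviving overlap patterns---essentially those where an endpoint or apex of one D3b reduction plays a different role in another D3b reduction sharing a tight local four-vertex subgraph---I would write down the two resulting graphs explicitly and exhibit the isomorphism, which in each case is simply the transposition of the two degree-three vertices whose labels differ between $X$ and $Y$; the rest of the graph is untouched by either reduction and therefore matches automatically. The main obstacle is to make the case enumeration exhaustive without redundancy, but the rigid degree-three constraint together with the constant size of each reduction's configuration keeps the check mechanical.
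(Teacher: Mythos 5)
There is a genuine gap, and it sits exactly in the case you dismiss as ``easy.'' Disjoint interior vertex sets do \emph{not} imply that the two reductions are independent, because the precondition of a D3a reduction is not just a list of adjacencies and degrees: it also requires the three outside neighbors to be \emph{distinct}, and a reduction performed elsewhere can merge vertices. Concretely, take two vertex-disjoint triangles of degree-three vertices joined by two edges (or by three edges, i.e.\ the triangular prism $K_3\times K_2$). Both D3a reductions are applicable and their interiors are disjoint, yet collapsing one triangle turns two (or all three) of the other triangle's outside neighbors into the same supervertex, so the second reduction is no longer applicable. Your verification in the disjoint case only checks that degrees are preserved and that ``all adjacencies needed by the other reduction survive,'' which is true here but irrelevant: distinctness fails. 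For such pairs the lemma's conclusion must be established via its second alternative---the two results are isomorphic (after collapsing either triangle one gets two triangles sharing an edge between the same pair of attachment vertices, or $K_4$ in the prism case)---and your proposal never produces this isomorphism, since your final paragraph of ``surviving overlap patterns'' only treats D3b/D3b interactions.

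This is precisely why the paper does not organize the proof by overlap of interior vertex sets. It instead asks which preconditions of $Y$ can be destroyed by $X$ (the distinctness of a triangle's outside neighbors, or the neighborhood/induced-path structure around a D3b middle vertex), and its two nontrivial cases---adjacent but vertex-disjoint D3a triangles, and D3b paths lying on a common four-vertex path under one apex---are both resolved by exhibiting isomorphic outcomes rather than commutation. Your shared-vertex analysis (no two distinct applicable D3a triangles can share a vertex; at most one D3b per middle vertex; no D3a/D3b interior overlap) is essentially right and consistent with the paper, but to repair the proof you must move the adjacent-D3a-triangles configuration out of the ``independent'' bucket and give the explicit isomorphism there, and likewise double-check D3b pairs whose paths share endpoints (so that the new edge $pr$ can spoil the other path's inducedness) rather than only those sharing an edge.
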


\begin{figure}[b]
\centering
\includegraphics[scale=0.35]{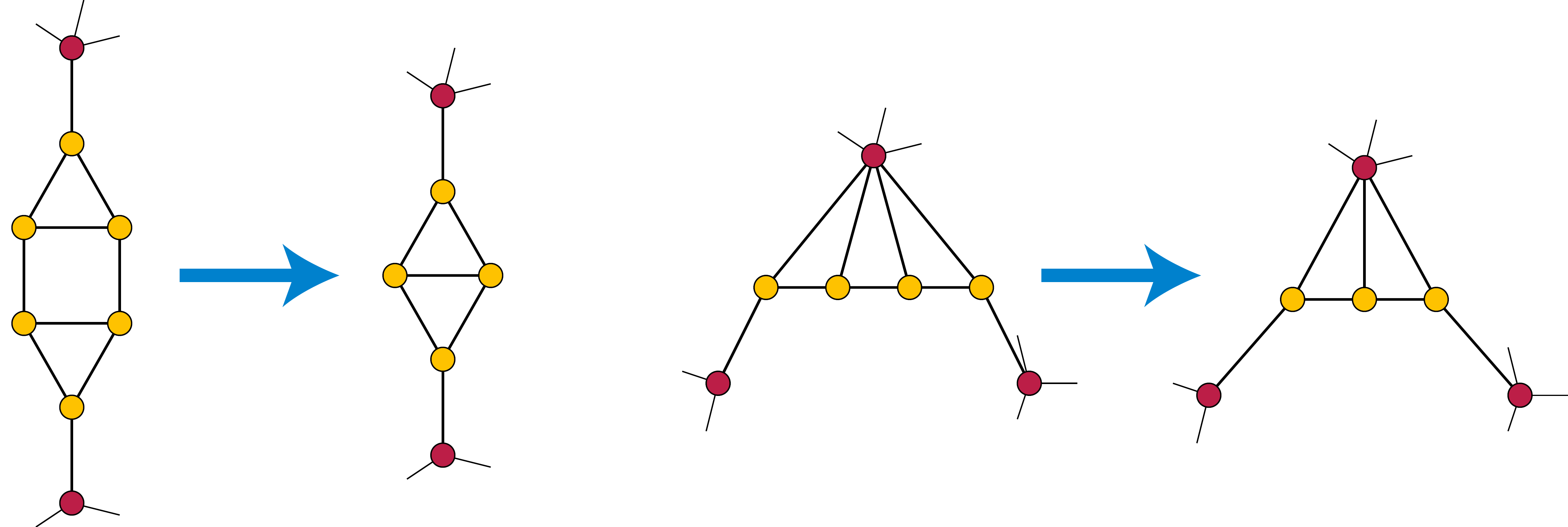}
\caption{Non-independent pairs of D3 reductions. Left: D3a reductions collapsing either of the two yellow triangles give isomorphic results. Right: D3b reductions shortening either of the two overlapping paths of three yellow vertices give isomorphic results.}
\label{fig:confluent}
\end{figure}

\begin{proof}
If $pqr$ is a triangle of degree-three vertices, and a D3 reduction that is not a D3a reduction of $pqr$ is performed, then $pqr$ remains a triangle of degree-three vertices.
And if $pqr$ is an induced path with a shared neighbor $s$, and a D3 reduction that is not a D3b reduction of $pqr$ is performed, then $q$ remains the middle vertex of an induced path with a shared neighbor~$s$. So the only way one reduction $X$ could prevent the future performance of another reduction $Y$ is by changing the neighbors of a middle vertex $q$ of a D3b reduction or causing the outside neighbors of a triangle to become non-distinct.
For this to happen, we have the following cases:
\begin{itemize}
\item If $X$ and $Y$ are both D3a reductions, then their two triangles of degree-three vertices are connected to each other by two or three edges. If they are connected by three edges, then the result of performing either reduction is $K_4$. If they are connected by only two edges, then after either of the two reductions the result is a graph in which the two non-adjacent vertices of the two triangles are linked by a pair of triangles that share an edge (\autoref{fig:confluent}, left).
\item If $X$ and $Y$ are both D3b reductions, then they can only affect each other if their two paths share an edge, so that they are both part of a path of four degree-three vertices that all are adjacent to the same apex. In this case the results of performing either reduction are a graph in which this four-vertex path has been replaced by a three-vertex path (\autoref{fig:confluent}, right).
\item If one of $X$ and $Y$ is a D3a reduction and the other is a D3b reduction, then neither reduction can affect the other one: a D3a reduction can't change the neighbors of the middle vertex of a path, and a D3b reduction can't make previously-distinct vertices become the same as each other.\qedhere
\end{itemize}
\end{proof}

\begin{lemma}
\label{lem:confluence}
In a 3-vertex-connected graph, all maximal sequences of D3 reductions lead to isomorphic irreducible graphs.
\end{lemma}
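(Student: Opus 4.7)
The plan is to derive global confluence from the local confluence given by \autoref{lem:interchangeable} together with the obvious termination of the reduction system, using Newman's lemma from abstract rewriting theory (adapted to work up to isomorphism rather than equality).

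First I would observe termination: rule D3a replaces three vertices by one, and rule D3b deletes one vertex, so every D3 reduction strictly decreases the vertex count. Hence every reduction sequence is finite and every graph has an irreducible descendant. I would also remark, as an easy check, that D3 reductions preserve 3-vertex-connectivity, so the 3-connectivity hypothesis propagates through all intermediate graphs in any reduction sequence (this is the only place the hypothesis enters, and it ensures the entire analysis stays inside the class where we intend to use the result).

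Next I would state and prove the Newman-style diamond argument by induction on the length of the longest reduction sequence out of $G$. Given two maximal sequences $G \to G_1 \to^* H_1$ and $G \to G_2 \to^* H_2$, apply \autoref{lem:interchangeable} to the two first reductions $X$ and $Y$: in the ``independent'' case there is a graph $G_{12}$ obtained from $G$ by performing both $X$ and $Y$, so $G_1 \to G_{12}$ and $G_2 \to G_{12}$; in the other case $G_1$ and $G_2$ are already isomorphic, and we set $G_{12} = G_1 \cong G_2$. Pick any maximal reduction sequence $G_{12} \to^* H_{12}$ to an irreducible graph. By the inductive hypothesis applied to the smaller graphs $G_1$ and $G_2$, the irreducible graphs $H_1$ and $H_{12}$ are isomorphic, as are $H_2$ and $H_{12}$; hence $H_1 \cong H_2$. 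The base case, when $G$ itself is irreducible, is trivial.

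The main technical obstacle I expect is the bookkeeping caused by \autoref{lem:interchangeable} giving confluence only up to isomorphism rather than on the nose. I would handle this by noting that the D3 reductions are defined purely in terms of graph structure, so any graph isomorphism $\varphi : G' \to G''$ transports any applicable reduction on $G'$ to an applicable reduction on $G''$ with isomorphic result; hence ``$G_1 \cong G_2$'' suffices to continue the induction, since a maximal reduction sequence out of $G_1$ can be transported to a maximal reduction sequence out of $G_2$ ending at an isomorphic irreducible graph. With this observation in hand the inductive step goes through cleanly, and we conclude that all maximal D3-reduction sequences from a 3-connected graph terminate at a common irreducible graph, unique up to isomorphism.
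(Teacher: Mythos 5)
Your proof is correct, and it hinges on the same key lemma as the paper's (\autoref{lem:interchangeable}), but the induction is organized differently. The paper proves confluence by an exchange argument on sequences: a double induction showing that any maximal sequence $\Sigma_2$ can be replaced by an equivalent sequence beginning with the first step $X$ of the other sequence, after which the outer induction applies to the graph obtained by performing $X$. You instead run the classical Newman-style argument: you make termination explicit (each reduction strictly decreases the vertex count), induct on the length of the longest reduction sequence out of $G$, and join the one-step divergence either at a common reduct $G_{12}$ (using the ``independent'' case of \autoref{lem:interchangeable}) or via the isomorphism case, comparing both given sequences against a chosen maximal sequence from $G_{12}$. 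Both routes are sound; yours is the standard rewriting-theory template and, to its credit, it spells out the point the paper leaves implicit in its ``replace $Y$ by $X$'' case, namely that reductions and maximal sequences can be transported along a graph isomorphism, which is exactly what makes an induction ``up to isomorphism'' legitimate. One small quibble: your aside that D3 reductions preserve 3-vertex-connectivity is neither as immediate as ``an easy check'' suggests (the D3a case, where the contracted supervertex could lie in a 2-cut, needs a short argument) nor actually needed --- the paper's proof of the lemma never uses the 3-connectivity hypothesis, and neither does the rest of your argument, so you should either prove that claim or simply drop the remark.
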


\begin{proof}
Let $\Sigma_1$ and $\Sigma_2$ be two maximal sequences of reductions starting from the same graph $G$, and let $X$ be the first reduction in $\Sigma_1$. We will prove that there exists a reduction sequence $\Sigma_3$ such that $\Sigma_3$ begins with $X$, and such that $\Sigma_2$ and $\Sigma_3$ transform $G$ into isomorphic graphs. The claimed result will then follow by induction on the length of the two sequences,
by applying the induction hypothesis to the graph obtained from $G$ by reduction $X$. As a base case, when either $\Sigma_1$ or $\Sigma_2$ is empty, then $G$ is already irreducible, and there can only be one maximal sequence of reductions (the empty sequence).

To show that every sequence $\Sigma_2$ has an equivalent sequence $\Sigma_3$ beginning with $X$, we again use induction on the length of $\Sigma_2$. $\Sigma_2$ cannot be empty, for operation $X$ can be applied to $G$ and the result of applying $\Sigma_2$ should be an irreducible graph; therefore, we can define $Y$ to be the first operation in $\Sigma_2$. We have three cases:
\begin{itemize}
\item If $X=Y$ then $\Sigma_2$ already begins with $X$ and the result follows.
\item If $X$ and $Y$ both give isomorphic graphs then we can create the desired sequence $\Sigma_3$ by replacing $Y$ by $X$ and again the result follows.
\item In the remaining case, by \autoref{lem:interchangeable}, $X$ and $Y$ may be applied independently. Let $H$ be the graph obtained from $G$ by reduction $Y$, and let $\Sigma_4$ be the reduction sequence on~$H$ obtained from $\Sigma_2$ by removing the first reduction~$Y$.
Then $X$ may be applied to $H$, and by induction there exists a reduction sequence $\Sigma_5$ on~$H$ that begins with $X$ and has the same effect as $\Sigma_4$. The desired reduction sequence $\Sigma_3$ may be obtained by applying reductions $X$ and $Y$ followed by the remaining reductions (after~$X$) in~$\Sigma_5$.\qedhere
\end{itemize}
\end{proof}

In the language of rewriting systems, \autoref{lem:confluence} means that D3 reductions are \emph{confluent}, or have the \emph{Church--Rosser property}. This allows us to apply them greedily without worrying about the ordering of the reductions.

\begin{theorem}
We can recognize D3-reducible graphs in linear time.
\end{theorem}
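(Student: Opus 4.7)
The plan is to apply D3 reductions greedily to the input graph until no further reduction applies, and then test in constant time whether the remaining graph is isomorphic to $K_4$. Correctness of the greedy strategy is supplied by \autoref{lem:confluence}: in a 3-vertex-connected graph every maximal reduction sequence terminates at the same irreducible graph up to isomorphism, so if the input is D3-reducible (and therefore has at least one sequence ending at $K_4$) the greedy run must also end at $K_4$. A non-3-connected input cannot be D3-reducible, since the inverse of a reduction sequence from $G$ to $K_4$ would build $G$ up from the 3-connected graph $K_4$ by operations each of which preserves 3-connectivity.

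For the data structure I would store the graph as doubly linked adjacency lists whose two copies of each edge are joined by cross-pointers so that a named edge can be deleted in $O(1)$ time, together with an integer degree for each vertex and a FIFO worklist of degree-three vertices that have not yet been inspected. Initialization from the input takes $O(n+m)$ time.

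The core observation is that inspecting one worklist entry takes constant time. A degree-three vertex $v$ has only three neighbors $u_1,u_2,u_3$, and for each $u_i$ that also has degree three, its three-element adjacency list can be scanned in $O(1)$. These scans suffice both to decide whether $v$ participates as a corner in a triangle of three mutually adjacent degree-three vertices with distinct outside neighbors (a D3a site) and whether $v$ plays the middle role of an induced path of three degree-three vertices sharing an apex (a D3b site). When an applicable reduction is detected it rewrites only $O(1)$ edges and changes only $O(1)$ vertex degrees; any vertex whose degree drops to three is pushed onto the worklist, and the $O(1)$ vertices whose local neighborhood has just been altered are re-pushed so that newly enabled reductions at already-degree-three vertices are not missed.

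Since each reduction removes at least one vertex, at most $O(n)$ reductions occur, and with $O(1)$ amortized work per reduction and per worklist pop the total running time is $O(n+m)$. The main obstacle is justifying the constant-time reduction test: triangle detection and common-neighbor detection are expensive in general graphs, but the degree-three promise at $v$ and at each relevant neighbor caps every adjacency query at the scan of a three-element list, which is what makes the analysis go through. A secondary subtlety, handled by the re-push step described above, is that a D3b reduction can enable a new reduction at a vertex whose degree was already three but whose neighborhood has just been rewritten.
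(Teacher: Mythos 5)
Your proposal is correct and takes essentially the same route as the paper: greedy application of D3 reductions driven by a worklist of degree-three vertices, constant-time local detection of D3a/D3b sites thanks to the degree-three bound, linear total work because each reduction adds only $O(1)$ worklist entries, and correctness via the confluence result (\autoref{lem:confluence}). Your only departures are refinements---explicitly discharging the 3-connectivity hypothesis of that lemma via the observation that D3-reducible graphs are built up from $K_4$ by connectivity-preserving expansions, and a somewhat more generous re-push rule after each reduction than the paper's targeted additions to its set $C$---neither of which changes the underlying argument.
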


\begin{proof}
We maintain an adjacency list representation of the graph $G$ after a sequence of reductions, allowing edge insertions and removals, degree tests, finding the endpoints of an edge, and finding the neighbors of a bounded-degree vertex in constant time per operation. We also maintain a collection $C$ of vertices or former vertices of the graph that is guaranteed to contain at least one of the three degree-three vertices of each possible D3a reduction and the middle vertex of each possible D3b reduction. Our algorithm performs the following steps:
\begin{enumerate}
\item Initialize $C$ to the set of all degree-three vertices in $G$.
\item While $C$ is non-empty:
\begin{enumerate}
\item Select and remove an arbitrary vertex $v$ from $C$.
\item If $C$ and two of its neighbors form the triangle of a D3a reduction, perform that reduction; add the new vertex formed by the reduction and all its degree-three neighbors to $C$.
\item Otherwise, if $C$ and two of its neighbors form the path of a D3b reduction, perform that reduction; if this causes the apex of the reduction to have degree three, add it to $C$.
\end{enumerate}
\item Test whether the resulting graph is $K_4$.
\end{enumerate}
$C$ initially contains $O(n)$ vertices. Each successful reduction adds $O(1)$ vertices to it, so the total number of vertices ever added to $C$ is linear. The time for the algorithm is $O(n)$ for the initialization and final testing stages, and $O(1)$ per vertex in $C$ for the inner loop, giving $O(n)$ in total. The correctness of the algorithm follows from \autoref{lem:confluence}.
\end{proof}

We remark that the graphs that can be obtained using only D3a reductions are the dual graphs of planar 3-trees, and the graphs that can be obtained using only D3b reductions are the wheel graphs.

\section{Shared properties with Halin graphs}

As we now show, D3-reducible graphs have many of the same properties that Halin graphs are known to have. We will use this fact to simplify some algorithmic computations on Halin graphs, such as the search for Hamiltonian cycles, by generalizing these computations to D3-reducible graphs.

\begin{theorem}
Every D3-reducible graph is 3-vertex-connected.
\end{theorem}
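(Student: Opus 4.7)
The plan is to induct on the length of a reduction sequence from $G$ to $K_4$. The base case is immediate: $K_4$ is 3-vertex-connected. For the inductive step, suppose $G$ reduces in one D3 step to a graph $G'$ that is 3-connected by the induction hypothesis, and fix a candidate 2-vertex separator $\{u,v\}$ of $G$; the goal is to show that $G-\{u,v\}$ is connected in every case, contradicting the assumption. The case analysis is driven by how $\{u,v\}$ meets the small set of vertices altered by the reduction, and in every case I reduce to a connectivity statement about $G'$ (or its deletion of one or two vertices) that follows from the inductive hypothesis.

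For a D3a reduction, let $p,q,r$ be the collapsing triangle in $G$ with distinct outside neighbors $a,b,c$, merged into $t$ in $G'$. Split on $|\{u,v\}\cap\{p,q,r\}|$. When the intersection is empty, the three triangle vertices lie in a single component of $G-\{u,v\}$ because they form a clique, so identifying them to $t$ exhibits $\{u,v\}$ as a 2-separation of $G'$, contradicting 3-connectivity of $G'$. When the intersection has size one, say $\{u,v\}=\{p,v\}$, one uses $G-\{p,q,r,v\}=G'-\{t,v\}$, which is connected by 3-connectivity of $G'$, and then reattaches $q,r$ through the edge $qr$ and through their outside neighbors $b,c$ (the distinctness of $a,b,c$ handles the subcase $v\in\{b,c\}$). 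When the intersection has size two, the remaining triangle vertex still reaches $G'-t$ through its outside neighbor.

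For a D3b reduction, let $p,q,r,s$ be as in the definition, so that passing to $G'$ deletes $q$ and inserts the edge $pr$. Split on whether $q\in\{u,v\}$. If $q\notin\{u,v\}$, then $G-\{u,v\}$ is obtained from the connected graph $G'-\{u,v\}$ by deleting edge $pr$ (when both endpoints survive) and inserting $q$ adjacent to the survivors of $\{p,r,s\}$; the key point is that even if deleting $pr$ were to disconnect $p$ from $r$ in $G'-\{u,v\}$, the new vertex $q$ restores the link, and $q$ itself always has at least one surviving neighbor because $\{p,r,s\}\setminus\{u,v\}$ has size at least one. If $q\in\{u,v\}$, say $u=q$, then $G-\{q,v\}=G'-v-pr$, and because $G'-v$ is 2-connected (it results from removing one vertex from the 3-connected graph $G'$) the edge $pr$ lies on a cycle of $G'-v$ and can be deleted without disconnection.

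The main obstacle is the bookkeeping in the D3b subcases where $\{u,v\}$ meets $\{p,r,s\}$, and the parallel D3a subcase where $\{u,v\}$ meets $\{b,c\}$: in each, one has to trace exactly which edges and vertices survive and confirm that the reinserted structure (namely $q$ together with its remaining incidences, or the two remaining triangle vertices together with their outside edges) supplies precisely the missing piece of connectivity. None of these verifications is deep; they are routine checks once the correct decomposition is in place. The mild boundary condition that $|V(G)|\geq 4$, needed for 3-connectivity to be meaningful, holds automatically since $|V(K_4)|=4$ and every reverse D3 step only adds vertices.
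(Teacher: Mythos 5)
Your proof is correct, and it follows the same outer skeleton as the paper (induction on the length of a D3 reduction sequence, with $K_4$ as base case), but the inductive step is handled by a genuinely different method. The paper works in the Menger/path formulation: for every pair of vertices of $G$ it explicitly lifts three vertex-disjoint paths from the reduced graph $H$ back to $G$, with separate cases according to how many of the pair lie in $\{p,q,r\}$, and the delicate cases are the pairs inside the reduced triple, where paths through the triangle or through the apex must be combined with a path between outside neighbors. You instead argue contrapositively with a hypothetical $2$-separator $\{u,v\}$ and show $G-\{u,v\}$ stays connected, leaning on standard facts (contracting a clique lying in one component preserves the component structure; deleting one vertex from a $3$-connected graph leaves a $2$-connected graph; deleting an edge from a $2$-connected graph preserves connectivity) plus the identity $G-\{p,q,r,v\}=G'-\{t,v\}$ in the D3a case and $G-\{q,v\}=(G'-v)-pr$ in the D3b case. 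Your case analysis checks out, including the two places where care is really needed: the D3a case with one separator vertex in the triangle, where the distinctness of the outside neighbors $a,b,c$ guarantees $q$ and $r$ can be reattached, and the D3b case with $q$ surviving, where the reinserted $q$ repairs any split caused by deleting the edge $pr$ and always retains a neighbor since $|\{p,r,s\}\setminus\{u,v\}|\ge 1$. What the two approaches buy: the paper's path construction is constructive and matches the reverse-the-reductions style used for its algorithmic results, while your separator argument is shorter on per-case bookkeeping at the cost of invoking the equivalence between ``no $2$-cut on at least four vertices'' and $3$-connectivity (routine, and your closing remark about $|V(G)|\ge 4$ covers the needed size condition).
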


\begin{proof}
We use induction on the number of D3 reductions in an (arbitrarily chosen) sequence of reductions that takes the given graph $G$ to $K_4$. As a base case, $K_4$ itself is connected. Otherwise, let $H$ be the graph formed from $G$ by the first reduction in the sequence. By induction, every two pairs of vertices in $H$ have three vertex-disjoint paths connecting them, and we must show that the same is true in $G$. We divide into cases:
\begin{itemize}
\item For pairs of vertices $x,y$ outside the set of three vertices $p$, $q$, and $r$ defining the reduction,
the three paths in $H$ connecting $x$ to $y$ may be straightforwardly modified to give three paths in $G$, replacing a path through the collapsed vertex of a D3a reduction by a path through two vertices, and a path through edge $pr$ of a D3b reduction by a path through edges $pq$ and $qr$.
\item For pairs of vertices one of which is $p$, $q$, or $r$, the required three vertex-disjoint paths connecting the pair of vertices can be found by replacing $p$, $q$, or $r$ by one of the corresponding vertices in $H$, finding three vertex-disjoint paths in $H$, and again making straightforward modifications to find three vertex-disjoint paths in the original graph.
\item In the remaining case, we are given two vertices of $p$, $q$, and $r$, and must find three vertex-disjoint paths connecting them in $G$. First, suppose that $H$ is obtained by a D3a reduction; by symmetry, we may assume that we are finding paths connecting $p$ and $q$. In this case, two such paths exist within the triangle defining the D3a reduction; the third path can be found as one of the three paths connecting the outside neighbors of $p$ and $q$. Second, suppose that $H$ is obtained from a D3b reduction and that we are connecting vertices $p$ and $r$. Again, two of the required paths from $p$ to $r$ exist: the induced path $pqr$ and the path $psr$ through the apex of the reduction. The third path can be found as one of the three paths connecting the outside neighbors of $p$ and $r$. Third and finally, suppose that we are connecting vertices $p$ and $q$ of a D3b reduction. In this case, we have paths $pq$ and $psq$ within the reduced part of the graph, and a third path through $r$ together with one of the three paths connecting the outside neighbors of $p$ and $r$.
\end{itemize}
Thus, for all pairs of vertices in $G$, there exist three vertex-disjoint paths, and the result follows.
\end{proof}

An alternative proof of this theorem can be given by applying a known reduction-based characterization of the 3-vertex-connected graphs: a graph is 3-connected if and only if it can be reduced to $K_4$ by the Barnette--Gr\"unbaum reduction rules~\cite[Thm.~1]{BarGru-MFGT-69}. These rules allow the removal of any edge from a graph and the suppression of any vertex of degree two created by this removal. Arbitrary edge removals can fail to reach $K_4$, but a sequence of reduction steps leading to $K_4$ can be found in polynomial time when it exists~\cite{Sch-Algo-12}. The same reduction rules have also been used as part of a linear-time planarity test~\cite{Sch-MFCS-13}. In our case, a D3a reduction can be seen as a Barnette--Gr\"unbaum reduction that removes a triangle edge, and a D3b reduction can be seen as a Barnette--Gr\"unbaum reduction that removes the edge between the middle vertex of a path and the apex of the reduction. Therefore, the reduction sequence for a D3-reducible graph also gives a Barnette--Gr\"unbaum reduction sequence taking the graph to~$K_4$.

For the next property of D3-reducible graphs, recall that, when a 3-vertex-connected graph is planar, its planar embedding is unique up to the choice of the outer face, and its faces are exactly the induced cycles for which the graph induced by the complementary set of vertices is connected~\cite{Bru-JCTB-04}.

\begin{theorem}
Every D3-reducible graph is planar, and every triangle in the graph is a face of its unique planar embedding.
\end{theorem}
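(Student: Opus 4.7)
The plan is to induct on the length of a reduction sequence taking $G$ down to $K_4$, proving both claims simultaneously. For the base case, $K_4$ is planar and each of its four triangles is a face. For the inductive step, let $H$ be obtained from $G$ by a single D3 reduction; by the induction hypothesis together with the previous theorem, $H$ is 3-connected and planar with every triangle a face of its unique planar embedding. I would show that $G$ inherits these properties by performing a local modification to the embedding of $H$.

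For a D3a reduction contracting a triangle $pqr$ (with distinct outside neighbors $a,b,c$) to a single degree-three vertex $t$, I would run the reduction backward: in the planar embedding of $H$, excise a small disk around $t$ and replace $t$ by a small triangle $pqr$, with the former edges $ta,tb,tc$ rerouted to $pa,qb,rc$ respecting the cyclic order around $t$. This preserves planarity, creates the new triangular face $pqr$, and merely adds one vertex to each of the three faces of $H$ formerly incident to $t$. Because $a,b,c$ are distinct and each of $p,q,r$ has only one outside neighbor, a short case check on pairs of neighbors shows that the only triangle of $G$ meeting $\{p,q,r\}$ is $pqr$ itself; every other triangle of $G$ is a triangle of $H$ disjoint from $t$, still a face by induction.

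For a D3b reduction with middle vertex $q$, end vertices $p,r$, and apex $s$, the reduced graph $H$ contains the triangle $prs$, which by induction is a face of $H$. I would undo the reduction by subdividing the edge $pr$ with the new vertex $q$ and drawing the edge $qs$ through the face $prs$, splitting it into two triangular faces $pqs$ and $qrs$; this gives a planar embedding of $G$. The neighbors of $q$ in $G$ are exactly $p,r,s$, so the only triangles of $G$ through $q$ are $pqs$ and $qrs$, both faces by construction. Any other triangle of $H$ through the edge $pr$, say $prx$ with $x\neq s$, becomes the $4$-cycle $pqrx$ in $G$ and is no longer a triangle; triangles of $G$ that avoid $q$ are precisely the triangles of $H$ that avoid the edge $pr$, and the local surgery at $pr$ (together with the new rotation position inserted at $s$) leaves their incident faces undisturbed.

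The main obstacle is the triangle bookkeeping in each case: I must use the distinctness-of-neighbors hypothesis of D3a and the shared-apex hypothesis of D3b to guarantee that no stray triangle of $G$ escapes classification either as a newly-created face from the embedding modification or as a preserved face inherited from $H$. Once this classification is verified, the planar surgery itself is routine.
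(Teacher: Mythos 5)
Your proposal is correct and follows essentially the same route as the paper: induct on the reduction sequence, reverse each D3a/D3b reduction as a local surgery on the unique embedding of the smaller graph (replacing a degree-three vertex by a triangle, or splitting a triangular face by a subdivision-plus-apex edge), and check that the only new triangles are the newly created faces while all other triangles are inherited faces. Your triangle bookkeeping is just a more explicit version of what the paper asserts briefly, and it is accurate.
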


\begin{proof}
We use induction on the number of D3 reductions; as a base case, $K_4$ clearly has the stated properties. For any other D3-reducible graph $G$, suppose that the graph has a D3 reduction $X$ leading to a smaller graph $H$; by the induction hypothesis, $H$ is planar with all triangles as faces. We have two cases:
\begin{itemize}
\item If $X$ is a D3a reduction, then $G$ may be obtained from $H$ by replacing a degree-three vertex $v$ by a triangle. A planar embedding of $G$ may be obtained from the embedding of $H$ by adding one new edge to each of the three faces that meet at $v$, and forming a new face triangle from the three new edges. The only new triangle created by this replacement is necessarily a face.
\item If $X$ is a D3b reduction, then $G$ may be obtained from $H$ by subdividing an edge $uv$ that belongs to a triangle $uvw$ and connecting the new subdivision vertex to the opposite apex $w$ of the triangle. An embedding of $G$ may be obtained in the same way, by splitting the face $uvw$ of the embedding of $H$ into two new triangular faces. The two new triangles formed from the subdivision are again faces of the subdivided embedding.\qedhere
\end{itemize}
\end{proof}

The proof of this result may be used to derive a linear-time algorithm to construct a planar embedding of a D3-reducible graph, more simply than using a general-purpose linear-time planar embedding algorithm, by reversing the sequence of reductions and maintaining a planar embedding for each step of the reversed reduction sequence.

\begin{theorem}
\label{thm:ham}
Every D3-reducible graph has a Hamiltonian cycle that can be found in linear time.
\end{theorem}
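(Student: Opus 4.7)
The plan is to prove existence by induction on the length of a D3 reduction sequence taking $G$ to $K_4$, with the inductive step showing that whenever $G$ reduces to $H$ in one step and $H$ is Hamiltonian, a Hamiltonian cycle $C_H$ of $H$ can be lifted by a constant-size local patch to a Hamiltonian cycle $C_G$ of $G$. The base case $K_4$ is trivially Hamiltonian. This will immediately yield the linear-time algorithm: run the reduction procedure of the preceding theorem and record its sequence of reductions, compute a Hamiltonian cycle of the final $K_4$, then replay the reductions in reverse order, updating the current cycle by a constant-time patch at each step; maintaining the cycle as a doubly linked list of vertices with cross-pointers into the adjacency structure makes each patch take $O(1)$ time, for $O(n)$ total.

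For the D3a inductive step, the triangle $pqr$ of $G$ with outside neighbors $p'$, $q'$, $r'$ was collapsed in $H$ to a single degree-three vertex $v$ whose neighbors are exactly $p'$, $q'$, $r'$; the cycle $C_H$ uses exactly two of the three edges at $v$, and up to symmetry contains a subpath $p' - v - q'$. Replacing this by $p' - p - r - q - q'$ uses only edges present in $G$, visits all three triangle vertices $p$, $q$, $r$, and leaves the rest of $C_H$ untouched, producing the desired $C_G$.

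The D3b step is the subtle part and the main obstacle. Here $q$ has been deleted from $G$ and replaced by the edge $pr$, with apex $s$ adjacent to all of $p$, $q$, $r$ in $G$ and with $p_0$, $r_0$ denoting the remaining outside neighbors of $p$ and $r$. If $C_H$ uses the edge $pr$, lifting is immediate: subdivide it with $q$. If $C_H$ does not use $pr$, then the two cycle edges at $p$ are forced to be $ps$ and $pp_0$ and those at $r$ to be $rs$ and $rr_0$, since the alternative of having $ps$, $pr$, $rs$ all in $C_H$ would make the triangle $prs$ a proper sub-cycle of $C_H$, impossible for $H$ on more than three vertices. Hence $C_H$ contains a subpath $p_0 - p - s - r - r_0$, and the key observation is that $q$ can be inserted without displacing $s$: replacing this subpath by $p_0 - p - q - s - r - r_0$ visits $q$ using only the edges $pq$, $qs$, $sr$ of $G$, yielding $C_G$ and completing the induction.
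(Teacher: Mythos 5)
Your proposal is correct and follows essentially the same route as the paper: induction on the reduction sequence with base case $K_4$, the same three local patches (reroute through all three triangle vertices for D3a, subdivide $pr$ when the cycle uses it, and otherwise insert the deleted vertex between $p$ and the apex $s$ using that degree three forces $ps$ and $rs$ into the cycle), and the same reverse-replay argument for linear time. The only quibble is that your remark about $psr$ forming a proper sub-cycle is unnecessary in the case where $pr$ is assumed absent from $C_H$, since the degree-three forcing already settles it.
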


\begin{figure}[t]
\centering
\includegraphics[scale=0.35]{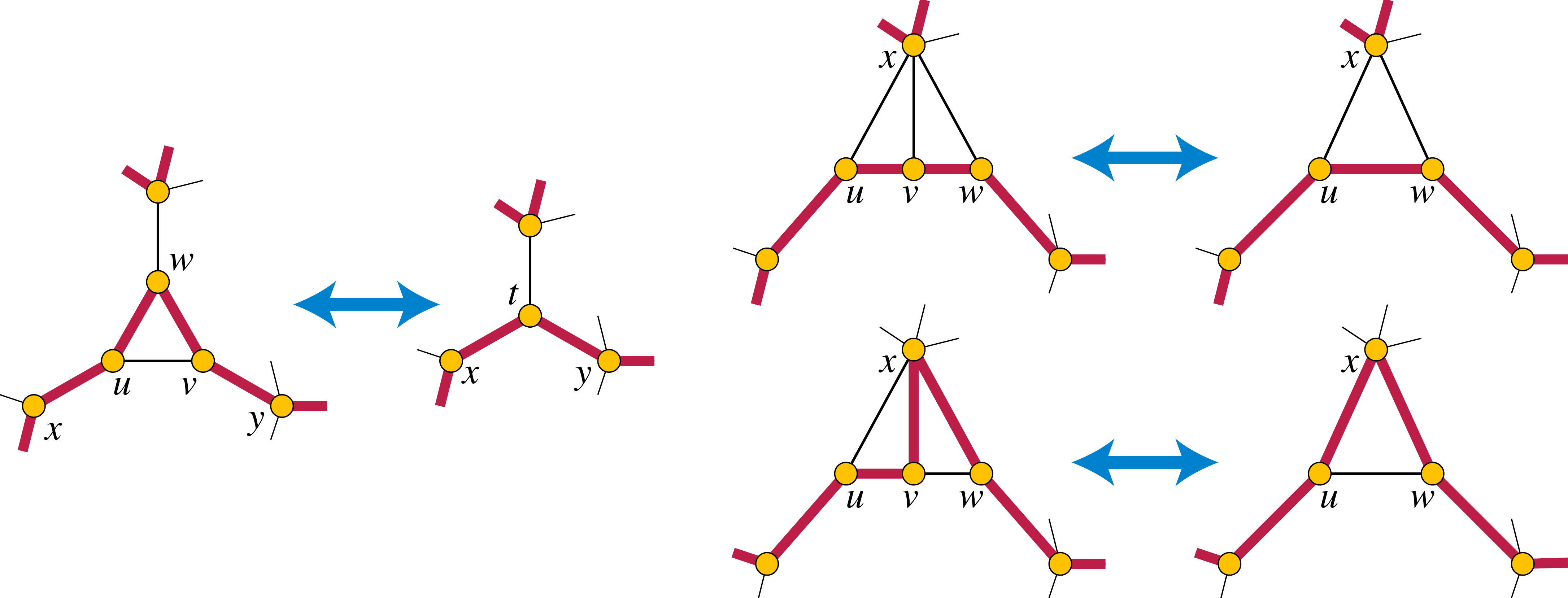}
\caption{Cases for \autoref{thm:ham}. In each case the rightward pointing arrow describes a D3 reduction from the given graph $G$ to a smaller graph $H$, and the leftward pointing arrow shows how to modify the Hamiltonian cycle $C$ in $H$ (thick red edges) to give a Hamiltonian cycle in~$G$.}
\label{fig:ham}
\end{figure}

\begin{proof}
We use induction on the number of D3 reductions.
As a base case, $K_4$ is Hamiltonian. For any other D3-reducible graph $G$, suppose that the graph has a D3 reduction $X$ leading to a smaller graph $H$; by the induction hypothesis, $H$ has a Hamiltonian cycle $C$. We have three cases:
\begin{itemize}
\item If $X$ is a D3a reduction that replaces triangle $uvw$ by a new vertex $t$, then let $xt$ and $ty$ be the two edges of $C$ that pass through $t$, and relabel the vertices if necessary so that $ux$ and $vy$ are edges in $G$. Then to form a Hamiltonian cycle in $G$, replace $tx$ and $ty$ in $C$ by the four edges $ux$, $uw$, $vw$, and $vy$. (\autoref{fig:ham}, left.)
\item If $X$ is a D3b reduction of path $uvw$ with apex $x$, and $C$ passes through edge $uw$,
then a Hamiltonian cycle in $G$ may be obtained by replacing $uw$ by $uv$ and $vw$ in $C$. (\autoref{fig:ham}, upper right.)
\item In the remaining case, $X$ is a D3b reduction of path $uvw$ with apex $x$, and $C$ does not pass through edge $uw$. Then (because $u$ and $w$ both have degree three in $H$, and have two incident edges in $C$) the two edges $ux$ and $xw$ must both belong to $C$.
In this case, a Hamiltonian cycle  in $G$ may be obtained by replacing $ux$ by
$uv$ and $vx$. (\autoref{fig:ham}, lower right.)
\end{itemize}
This inductive proof translates directly to an algorithm that reverses the reduction sequence of the  graph and maintains a Hamiltonian cycle for the graph at each step of the reversed reduction sequence. Updating the cycle after each reversed reduction takes constant time so the total time for the algorithm is linear.
\end{proof}

Halin's original motivation for studying Halin graphs was that they provided a natural class of minimally three-connected graphs. This is also true more generally for D3-reducible graphs.

\begin{theorem}
For every D3-reducible graph $G$, and every edge $uv$ of $G$, the graph $G-uv$ formed by deleting $uv$ from $G$ is not 3-vertex-connected.
\end{theorem}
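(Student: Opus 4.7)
The plan is induction on the length of a D3 reduction sequence taking $G$ to $K_4$. The base case $G = K_4$ is immediate: removing any edge of $K_4$ leaves two degree-two vertices, and the two remaining neighbors of either form a $2$-cut. For the inductive step, let $X$ be the first reduction, taking $G$ to a smaller D3-reducible graph $H$, and split the analysis according to whether $uv$ is ``local'' to $X$ or also appears as an edge of $H$.

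When $uv$ is local to $X$ --- that is, a triangle or attachment edge of a D3a reduction on $\{p,q,r\}$, or one of the edges $pq$, $qr$, $qs$ of a D3b reduction with apex $s$ --- I plan to exhibit the $2$-cut directly: in each of these cases one vertex of $\{p, q, r\}$ has degree two in $G - uv$, and its two remaining neighbors isolate it.

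When $uv$ is also an edge of $H$, I would apply the inductive hypothesis to obtain a $2$-cut $\{x, y\}$ of $H - uv$ and lift it to a $2$-cut of $G - uv$. For a D3b reduction (which deletes $q$ and replaces it by the edge $pr$ in $H$), the lift is essentially automatic: the path $p$--$q$--$r$ reinstates the $p$--$r$ connection through $q$, the edges $ps$ and $rs$ are common to both graphs, and $q$ simply attaches to the component containing $p$, $r$, $s$, so the component structure of $G - uv - \{x, y\}$ matches that of $H - uv - \{x, y\}$ no matter how $\{x, y\}$ intersects $\{p, r, s\}$. For a D3a reduction collapsing $\{p, q, r\}$ to $t$, the easy subcase is $t \notin \{x, y\}$: the triangle $pqr$ together with its three attachment edges replaces $t$ as a connected subgraph hanging off the same component of $H - uv - \{x, y\}$ that previously contained $t$, preserving all of the other components.

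The main obstacle will be the D3a case with $t \in \{x, y\}$, say $x = t$. Here the plan is a short pigeonhole on the neighbors $N_t = \{a_p, a_q, a_r\} \setminus \{y\}$ of $t$ in $H - uv$. The $3$-connectivity of $H$ forces every component of $H - uv - \{t, y\}$ to meet $N_t$, because otherwise $\{y\}$ alone would disconnect $H - uv$, but $H - y$ is $2$-connected and removing the single edge $uv$ from a $2$-connected graph leaves it connected. Since $|N_t| \leq 3$, there are at most three such components, and some component $C$ meets $N_t$ in exactly one vertex $a_i$; then $\{i, y\}$ is the desired $2$-cut of $G - uv$, since in $G - uv - \{i, y\}$ the two surviving triangle vertices stitch the remaining $a_j$'s and their components together through their attachment edges, while $C$ loses its sole link $i a_i$ to the triangle and becomes isolated from the rest of the graph.
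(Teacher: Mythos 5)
Your overall strategy (induction over the reduction sequence, handling ``local'' edges directly and lifting a $2$-cut of $H-uv$ otherwise) is workable, and your hardest case --- the D3a reduction with $t\in\{x,y\}$, resolved by the pigeonhole on $N_t$ --- is argued correctly. But there is a genuine gap in the D3b lifting step. You claim that when $uv$ survives into $H$, the component structure of $G-uv-\{x,y\}$ matches that of $H-uv-\{x,y\}$ ``no matter how $\{x,y\}$ intersects $\{p,r,s\}$.'' This fails precisely when $uv$ is one of the edges $ps$ or $rs$: these are edges of both $G$ and $H$, so they land in your lifting case, yet then the surviving vertices of $\{p,r,s\}$ need not lie in a single component of $H-uv-\{x,y\}$, and reinstating $q$ (adjacent to $p$, $r$, and $s$) can merge the two sides of the cut. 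Concretely, let $G$ be the wheel with hub $s$ and rim $v_1v_2v_3v_4v_5$, let the reduction delete $q=v_2$ from the path $v_1v_2v_3$ with apex $s$ (so $H$ is the wheel on rim $v_1v_3v_4v_5$), and take $uv=sv_1$. In $H-sv_1$ the pair $\{v_3,v_5\}$ is a $2$-cut (it isolates $v_1$), but in $G-sv_1$ deleting $\{v_3,v_5\}$ leaves the connected path $v_1$--$v_2$--$s$--$v_4$: the restored vertex $v_2$ re-links $v_1$ to $s$, so this inductive cut does not lift. The repair is easy and is exactly the paper's first move: if $u$ or $v$ has degree three in $G$ (which is automatic when $uv\in\{ps,rs\}$, since $p$ and $r$ have degree three), take as the $2$-cut the two neighbors of that endpoint other than the other end of $uv$, and only invoke the induction when both endpoints have degree at least four; with that case pulled out, your remaining lifting argument goes through.

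Once fixed, note that your route is genuinely different from the paper's: the paper proves a stronger, embedding-based statement --- for each edge $uv$ it exhibits two vertices and three faces supporting a Jordan curve through $uv$ that separates $u$ from $v$ --- and carries this face-incidence data through the induction, whereas your argument is purely combinatorial, at the cost of the more delicate case analysis on how the cut interacts with the reduction gadget.
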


\begin{proof}
Fix a planar embedding of $G$, and let $A$ and $B$ be the faces of $G$ on the two sides of $uv$ in the embedding. We prove more strongly that there exists a Jordan curve that passes through an interior point of $uv$, a vertex $x$ in $A$ (disjoint from $uv$), an a vertex $y$ of $B$ (also disjoint from $uv$) without passing through any other vertices or edges of~$G$. Equivalently, there exists a face $C$, distinct from $A$ and $B$, that includes both $x$ and $y$ as its vertices, so that the desired Jordan curve can be partitioned into three arcs: one in $A$ from $x$ to $uv$, one in $B$ from $uv$ to $y$, and one in $C$ from $y$ back to $x$. Because edge $uv$ crosses this Jordan curve once, it necessarily separates $u$ from $v$, so $x$ and $y$ form a 2-separation of $G-uv$.

If either $u$ or $v$ has degree three, we may take $x$ and $y$ to be the two of its neighbors that are disjoint from $uv$, and $C$ to be the third face (with $A$ and $B$) that is incident to the degree-three vertex. Otherwise, $G$ cannot be $K_4$, so we may assume that it has a D3 reduction $X$ taking it to a smaller D3-reducible graph $H$. Since neither $u$ nor $v$ has degree three, the same edge $uv$ is also present in $H$. By induction, $H$ has faces $A'$, $B'$, and $C'$ and vertices $x'$ and $y'$ with the desired incidence relations to each other.
Whenever $A'$, $B'$, or $C'$ is not the triangular face resulting from a D3b reduction, it has a corresponding face $A$, $B$, or $C$ in $G$. We have the following cases.
\begin{itemize}
\item If $X$ is a D3a reduction whose new supervertex is disjoint from $x'$ and $y'$, then the same vertices $x'$ and $y'$ and (possibly modified) faces $A$, $B$, and $C$ have the same incidence relations in $G$.
\item If $X$ is a D3a reduction that is not disjoint from $x'$ and $y'$, we may assume by symmetry that $x$ is the supervertex formed by contracting a triangle $pqr$. Then in $G$, faces $A$ and $C$ still meet at one of $p$, $q$, or $r$; relabel the triangle if necessary so that they meet at $p$.
Then vertices $p$ and $y$ and faces $A$, $B$, and $C$ have the desired incidence relations.
\item If $X$ is a D3b reduction,  $x'$ and $y'$ are adjacent in $H$, and edge $x'y'$ is not created by reduction $X$, then they remain adjacent in $G$, and either of the two faces incident to them may be chosen as $C$.
\item If $X$ is a D3b reduction,  $x'$ and $y'$ are adjacent in $H$, and edge $x'y'$ is created by reduction $X$ that removed the middle vertex $z$ of a path $x'zy'$, then in $G$ the path $x'zy'$ has the two triangles of the D3b reduction on one side of the path, and a single face $C$ incident to both $x'$ and $y'$ on the other side of the path (since $z$ necessarily has degree three). Again, $x'$, $y'$, $A$, $B$, and $C$ have the desired incidence relation.
\item If $X$ is a D3b-reduction, and $x'$ and $y'$ are not adjacent in $H$, then $C'$ is a face of $H$ with four or more vertices. Then $C'$ corresponds to a face $C$ of $G$ with either the same set of vertices, or with one more vertex (the middle vertex $z$ of the path that was shortened by the D3b reduction). Vertices $x'$, $y'$, and faces $A$, $B$, and $C$ have the desired incidence relation.
\end{itemize}
Thus, in all cases we have shown the existence of three faces and two vertices that, together with edge $uv$, support a Jordan curve separating $u$ from $v$.
\end{proof}

\section{Decomposition, duality, and graph drawing}

\begin{figure}[b]
\centering\includegraphics[height=2in]{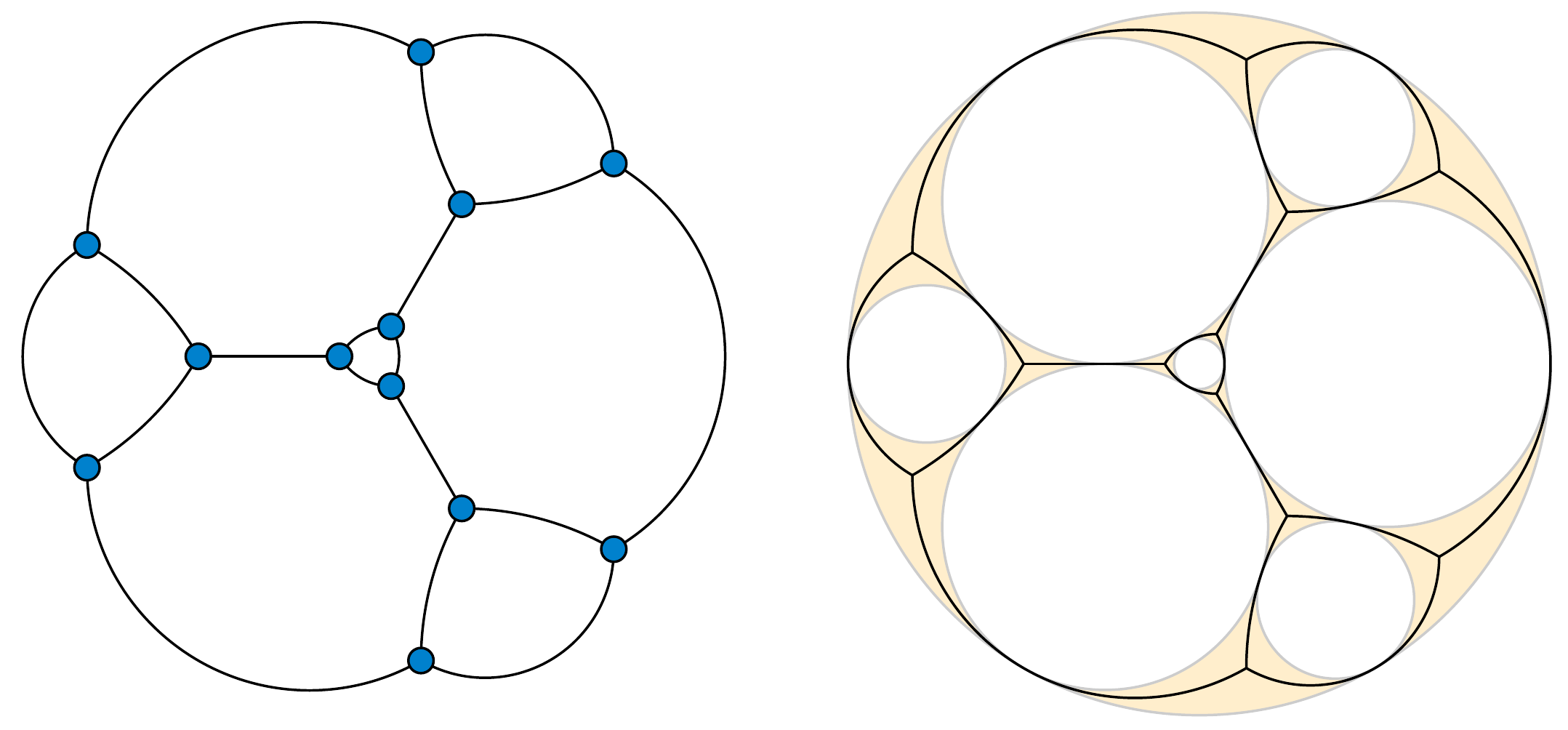}
\caption{A Lombardi drawing of the graph of the truncated tetrahedron (left) and the circle packing used to construct the drawing (right). This graph is D3-reducible and has treewidth~4.}
\label{fig:trunctet}
\end{figure}

Halin graphs all have treewidth three, but this is not true of D3-reducible graphs. In particular, the graph of the truncated tetrahedron (\autoref{fig:trunctet}) is D3-reducible, but has treewidth~four: contracting the six edges that do not belong to triangles produces the octahedral graph $K_{2,2,2}$, which is one of the minor-minimal graphs of treewidth four. However, this example has the largest treewidth possible for these graphs. To prove this, we provide a structural description of the dual graphs of D3-reducible graphs, in terms of clique-sums, operations in which complete subgraphs of pairs of graphs are identified.

Suppose that $G$ and $H$ are two polyhedral graphs in which we have identified an explicit isomorphism between two triangular faces $uvw$ of $G$ and $u'v'w'$ of $H$. Then we may glue these two graphs together by forming the disjoint union of $G$ and $H$ and then collapsing each identified pair of vertices $u$--$u'$, $v$--$v'$, and $w$--$w'$ to a single supervertex. A general clique-sum operation would also allow the removal of some or all of the triangle edges but we do not do this. The result of this gluing operation is a larger polyhedral graph in which the two faces have become a single non-facial triangle. We may perform repeated gluing operations on a collection of polyhedral graphs in the same way, but each triangular face of a graph in the collection may take part only in one of these gluing operations (after which it is no longer a face). We do not allow a graph to be glued to itself, whether it is one of the given graphs or the result of previous gluing steps, because this would not necessarily preserve planarity. Gluing together $p$ polyhedral graphs involves $p-1$ gluing steps (each of which reduces the number of graphs by one), and we can represent these steps abstractly as the edges of a tree whose nodes correspond to the given graphs. The order in which the gluing steps are performed does not affect the result.

\begin{theorem}
\label{thm:glue}
A graph $G$ is the planar dual of a D3-reducible graph if and only if $G$ can be constructed by gluing together a collection of polyhedral graphs, as described above, such that each graph in the collection is a wheel graph (the graph of a tetrahedron or pyramid).
\end{theorem}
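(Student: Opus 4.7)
The plan is to prove both implications by induction, correlating D3 reductions on $H$ with operations on the pyramid decomposition of $G = H^*$.

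For the ``if'' direction, I would induct on the number $k$ of pyramids in the gluing. When $k = 1$, $G = W_n$; since wheels are self-dual, $H = W_n$, and $n-3$ applications of D3b to triples of consecutive rim vertices (all sharing the hub as their unique common neighbor) reduce $H$ to $K_4$. For the inductive step, I pick a leaf pyramid $P = W_n$ of the gluing tree, attached to the rest of $G$ along a single triangular face $f$, and let $G'$ be $G$ with $P$ removed. In $H$, the base $n$-gon of $P$ corresponds to a vertex $b$ of degree $n$; its neighbors consist of $n-1$ duals of the remaining side triangles of $P$, which form an induced path on which $b$ acts as the unique common apex, plus one vertex external to $P$ dual to a face of $G'$ adjacent to $f$. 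I apply $n-3$ D3b reductions along this path, each dropping $b$'s degree by one, after which the path reduces to a single edge $t_1^* t_{n-1}^*$; a single D3a then collapses the resulting triangle $\{b, t_1^*, t_{n-1}^*\}$, whose three outside neighbors are the duals of the three faces of $G'$ around $f$, distinct by 3-connectivity of $G'$. The overall effect transforms $H$ into $(G')^*$, which is D3-reducible by the inductive hypothesis.

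For the ``only if'' direction, I would induct on the number of D3 reductions in a sequence taking $H$ to $K_4$. The base case $H = K_4$ gives $G = W_3$, a single-pyramid decomposition. For the inductive step, apply one D3 reduction $H \to H'$; by the inductive hypothesis $G' = (H')^*$ has a pyramid decomposition $\mathcal{D}'$, and I construct one for $G$ case by case. In the D3a case, the collapsed triangular face $pqr$ of $H$ corresponds to a degree-3 vertex $v^*$ of $G$, and $G$ is obtained from $G'$ by gluing a new tetrahedron with apex $v^*$ along the triangular face $t^*$ of $G'$; I simply extend $\mathcal{D}'$ by this $W_3$. In the D3b case, the removed middle vertex $q$ corresponds in $G$ to a pair of adjacent degree-3 vertices $A, B$ (duals of the triangular faces $pqs$ and $qrs$) that are merged in $G'$ into the single degree-3 vertex $E$ (dual of $prs$); I identify the pyramid of $\mathcal{D}'$ containing $E$ as the unique one having apex $C$ (the dual of the face of $H$ containing $pq$ and $qr$ but not $qs$) and base $s^*$, and convert $\mathcal{D}'$ into a decomposition of $G$ by replacing this pyramid with the wheel obtained by inserting a new rim vertex between $E$'s two rim neighbors, equivalently splitting $E$ into the pair $A, B$.

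The main obstacle is justifying the D3b step of the converse: that the pyramid in $\mathcal{D}'$ containing $E$ really does have apex $C$ and base $s^*$, so that the growth operation is well-defined. I would argue that since $E$ has degree 3 in $G'$ it must lie in exactly one pyramid (a vertex shared between pyramids through a glued triangle would accumulate extra incident edges and have degree at least 4), and within that unique pyramid the two triangular faces at $E$ that contain $C$ must be side-triangles forcing $C$ to be the apex, while the remaining face $s^*$ must be the base. The only subtle subcase is $d_{H'}(s) = 3$, in which $s^*$ is itself a triangle and the pyramid is $W_3 = K_4$; since a tetrahedron has no intrinsic apex/base distinction, I freely designate $C$ as the apex so that $E$ becomes a rim vertex, and the growth step proceeds as before.
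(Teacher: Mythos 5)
Your proof is correct and follows essentially the same route as the paper's: both directions proceed by induction, with a D3a reduction dual to (un)gluing a tetrahedron (i.e., (un)subdividing a triangular face) and a D3b reduction dual to growing or shrinking a wheel by one rim vertex, and with the same key observation that the degree-three dual vertex created by a D3b reduction must lie in a unique pyramid of the decomposition because gluing would raise its degree. Your handling of the D3b case of the converse---explicitly pinning down the apex and base of that pyramid and noting the free designation in the $K_4$ subcase---is in fact more careful than the paper's, which leaves those details implicit.
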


\begin{proof}
In one direction, suppose that $G$ is formed by gluing together wheel graphs. We may order the gluing steps so that each step glues a single wheel onto another graph, rather than gluing together two graphs that are themselves the result of other gluing steps. Gluing a four-vertex wheel (the complete graph $K_4$) can be equivalently described as subdividing a triangular face of $G$ into three smaller triangles; the time-reversed operation in the dual graph is a D3a reduction. Gluing a larger wheel may be described as a multiple-step process in which we first glue a four-vertex wheel and then increase the number of vertices in the glued wheel; each of these vertex-increasing operations is the dual to a time-reversed D3b reduction. Thus, reversing and dualizing the sequence of gluing and wheel-increase steps gives us a D3 reduction of the dual graph, showing that it is D3-reducible.

In the other direction, suppose that $G$ is the dual graph of a D3-reducible graph $G'$. As a base case, if $G'$ is $K_4$, $G$ is also $K_4$ and is the graph of a four-vertex wheel. Otherwise, let $X$ be a D3 reduction in $G'$ taking it to a smaller graph $H'$, and let $H$ be the dual of $H'$. By induction, we may assume that $H$ has a representation as a gluing of wheel graphs. If $X$ is a D3a reduction, the dual operation to $X$ un-subdivides a triangle of $G$, and is equivalent to the time-reversal of gluing a four-vertex wheel onto $H$. If $X$ is a D3b reduction of a path $uvw$ with apex $x$ then the vertex $z$ of $H$ dual to triangle $uwx$ has degree three; because each gluing step increases the degree of the glued vertex, this implies that $z$ belongs only to a single wheel of the gluing for $H$. The two vertices $u$ and $w$ are dual to adjacent triangles in $H$, and the D3b reduction is the time-reversed dual of an operation that expands the edge between them into another triangle, increasing the number of vertices of this wheel. Thus, as before, reversing and dualizing the sequence of D3 reductions for $G'$ gives us a sequence of gluing steps for constructing $G$.
\end{proof}

\begin{corollary}
The dual graph of a D3-reducible graph has treewidth three.
\end{corollary}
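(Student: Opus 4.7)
The plan is to invoke \autoref{thm:glue} and then combine tree decompositions of the constituent wheels using the standard fact that treewidth is preserved under clique-sum along cliques of size at most the treewidth bound plus one. By \autoref{thm:glue}, the dual graph $G$ decomposes as a gluing of wheel graphs along triangular faces, organized as a tree $T$ of gluings.

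The first step is to exhibit, for each wheel in the collection, a tree decomposition of width three in which every triangular face appears as a subset of some bag. For the wheel with apex $a$ and rim cycle $v_1, v_2, \dots, v_k$ (with $k \ge 3$), such a decomposition is the path of bags $B_j = \{a, v_1, v_j, v_{j+1}\}$ for $j = 2, 3, \dots, k-1$; every edge of the wheel lies in some bag, and every triangular face $\{a, v_j, v_{j+1}\}$ (indices mod $k$) also lies in some bag. For the base case $k=3$, a single bag $\{a, v_1, v_2, v_3\}$ suffices.

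The second step is the inductive gluing. Suppose $H_1$ and $H_2$ have width-three tree decompositions in which their shared triangle $\tau$ appears in bags $B_1$ and $B_2$ respectively. Identify the three vertex pairs of $\tau$ across the two decompositions, take the disjoint union of the decomposition trees, and add a single edge joining $B_1$ to $B_2$. The result is a width-three tree decomposition of the clique-sum of $H_1$ and $H_2$ along $\tau$, and both $B_1$ and $B_2$ still contain $\tau$. Iterating this operation along the gluing tree $T$ supplied by \autoref{thm:glue} produces a width-three tree decomposition of~$G$.

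The main subtlety is the bookkeeping invariant that, after each gluing step, every triangular face still available for future gluings remains contained in some bag of the current decomposition. This invariant is automatic in the construction above, since gluing only removes a triangle from the list of faces still available for identification, never from any bag; so the induction carries through the entire gluing tree and establishes that the treewidth of $G$ is at most three.
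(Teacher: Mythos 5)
Your proof is correct and follows essentially the same route as the paper: both rest on \autoref{thm:glue} together with the facts that wheels have treewidth three and that clique-sums along triangles do not increase treewidth. The only difference is that you construct the width-three decompositions of the wheels and verify the gluing step explicitly, where the paper simply cites the known clique-sum property, so your version is a self-contained elaboration of the same argument.
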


\begin{proof}
Every wheel graph is a Halin graph, so it has treewidth three, and it is known that clique-sums do not increase the treewidth of the graphs they combine~\cite{Lov-BAMS-06}.
\end{proof}

\begin{corollary}
\label{cor:tw4}
Every D3-reducible graph has treewidth at most four.
\end{corollary}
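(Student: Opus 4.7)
The plan is to derive this corollary as a direct consequence of the preceding corollary (that the dual graph has treewidth three) combined with the classical fact that, for a connected planar graph, the treewidths of the graph and its planar dual differ by at most one. Since every D3-reducible graph is planar and $3$-vertex-connected (and in particular connected), this duality bound applies, and treewidth $3$ on the dual side immediately yields treewidth at most $4$ on the primal side.

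Concretely, I would open the proof by invoking the previous corollary, stating that the planar dual $G^{*}$ of a D3-reducible graph $G$ satisfies $\operatorname{tw}(G^{*}) \le 3$. Then I would cite the theorem (due to Bouchitt\'e, Mazoit, and Todinca, building on Lapoire) that for any connected planar graph one has $\operatorname{tw}(G) \le \operatorname{tw}(G^{*}) + 1$. Chaining these two facts gives $\operatorname{tw}(G) \le 4$, which is exactly the claim.

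The only step that requires any thought is the appeal to duality, and it is genuinely the main (and only) obstacle: the paper has not developed this tool in-house, so the proof must cite it cleanly rather than re-derive it. If the author prefers a self-contained argument, an alternative route would be to induct on the D3-reduction sequence, producing a width-$4$ tree decomposition of $G$ from one of $H$ by splitting the bag containing the replaced/removed vertex to accommodate the two extra vertices introduced by the reverse of a D3a or D3b step; the truncated tetrahedron example in \autoref{fig:trunctet} shows this width is tight, so no better bound than $4$ can be obtained by this approach. Either way, the corollary follows in a few lines.
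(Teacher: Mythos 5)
Your proposal matches the paper's proof exactly: the paper derives the corollary by combining the previous corollary (the dual has treewidth three) with the cited fact~\cite{BouMazTod-DM-03} that a graph's treewidth is at most one more than that of its dual. Your argument is correct and takes essentially the same route.
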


\begin{proof}
This follows from the fact that the treewidth of a graph is at most one more than the treewidth of its dual graph~\cite{BouMazTod-DM-03}.
\end{proof}

It would be of interest to find a direct proof of \autoref{cor:tw4} that leads to a simple linear-time construction of a width-four tree-decomposition.

The gluing construction for the duals of D3-reducible graphs can also be applied in the construction of graph drawings for the D3-reducible graphs themselves. It is a famous theorem that the vertices of every planar graph can be represented by interior-disjoint disks in such a way that two disks are tangent if and only if the corresponding two vertices are adjacent~\cite{Ste-ICP-05}. For duals of D3-reducible graphs this representation can be chosen with an additional property:

\begin{lemma}
\label{lem:cpack}
Let $G$ be the dual of a D3-reducible graph $G'$. Then its vertices can be represented by interior-disjoint disks as above such that, for every face of $G$, the disks for the vertices of the face are equivalent under a M\"obius transformation to a ring of $d$ congruent disks with cocircular centers.
\end{lemma}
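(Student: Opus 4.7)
The plan is to induct on the number of wheels in the gluing decomposition of $G$ supplied by \autoref{thm:glue}. In the base case $G = K_4$, I would take four mutually tangent congruent disks; every triangular face then satisfies the condition, because any three mutually tangent disks are M\"obius-equivalent to three congruent mutually tangent disks, whose centers form an equilateral triangle and hence lie on a common circle.

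For the inductive step, I would write $G$ as the gluing of a previously handled graph $H$ with a fresh wheel $W_d$ along a triangular face. The wheel $W_d$ carries a natural rotationally symmetric packing: a central apex disk surrounded by $d$ congruent rim disks, each tangent to the apex and to its two cyclic neighbors. This packing already satisfies the face condition, since the $d$-gonal rim face is literally a ring of $d$ congruent cocircular disks, and each triangular face is handled as in the base case. The induction hypothesis supplies a packing of $H$ with the face condition. I would then apply a M\"obius transformation to $W_d$'s symmetric packing that sends its three shared-face disks onto the three disks of the glued face of $H$, matched according to the vertex correspondence of the gluing.

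Combining the two packings yields a packing of $G$: tangencies are preserved by M\"obius transformations, so edges of $G$ correspond correctly to tangencies; and interior-disjointness holds provided the remaining disks of $W_d$ land inside the previously empty curvilinear triangular region bounded by the three shared disks in $H$'s packing. Granting this, the face property for $G$ follows: faces of $H$ not equal to the glued triangle retain their disks and the property by induction; non-glued faces of $W_d$ are images under a single M\"obius transformation of configurations already verified to be equivalent to rings of congruent cocircular disks, and this equivalence class is M\"obius-invariant; and the glued triangle itself is no longer a face of $G$.

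The main obstacle I expect is geometric placement: verifying that a M\"obius transformation carrying the three shared disks onto their counterparts can be chosen to send the rest of $W_d$'s packing into the empty region of $H$, rather than overlapping $H$'s other disks. Three mutually tangent disks partition the Riemann sphere into exactly two curvilinear triangular regions, and the full M\"obius group (allowing anti-conformal reflections) acts with enough freedom on ordered triples of mutually tangent disks to realize either assignment of these two complementary regions; selecting the assignment that sends $W_d$'s remaining disks into $H$'s empty region completes the construction.
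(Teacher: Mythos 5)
Your proposal is correct and follows essentially the same route as the paper: induct over the gluing decomposition from \autoref{thm:glue}, use the rotationally symmetric packing of each wheel, and apply a M\"obius transformation matching the three disks of the glued triangular face. You even spell out the interior-disjointness/placement point (choosing the side of the two complementary curvilinear triangles) that the paper's much shorter proof leaves implicit.
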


\begin{proof}
Every wheel graph has a representation of this form. If two polyhedral graphs $G$ and $H$ are glued together on triangular faces, their disk representations may also be obtained by gluing together the representations for $G$ and $H$, applying a M\"obius transformation to the representation of $G$ make the three disks for the gluing face have the same size and position as they do in the representation of $H$.  The result follows from \autoref{thm:glue}.
\end{proof}

\begin{corollary}
Every D3-reducible graph has a planar \emph{Lombardi drawing}, a drawing in which the vertices are represented by points and the edges are represented by circular arcs that meet at equal angles at each vertex.
\end{corollary}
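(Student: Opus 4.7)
The plan is to construct the Lombardi drawing of $G$ directly from the symmetric circle packing of the dual graph $G^*$ supplied by \autoref{lem:cpack}. Each vertex $v$ of $G$ is dual to a face $f$ of $G^*$, around which the bounding disks are M\"obius-equivalent, via some M\"obius transformation $M_f$, to a ring of $k = \deg(v)$ congruent disks whose centers lie on a circle with center $c_f$. I would place $v$ at $M_f^{-1}(c_f)$, so that $v$ lies inside the hole of the packing enclosed by $f$.

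For the edges, the key geometric observation in the symmetric model for $f$ is that, for two adjacent disks of $f$ meeting at a tangency point $t$, the radial segment from $c_f$ to $t$ lies along the common tangent line of those two disks at $t$: since the disks are congruent and externally tangent, $t$ is the midpoint of the chord joining the two disk centers and hence sits on the perpendicular dropped from $c_f$ to that chord. Pulling back by $M_f^{-1}$ turns each such radial segment into a circular arc through $v$ that is tangent, at the image of $t$, to both incident disks; by the conformality of M\"obius transformations, the $k$ such arcs at $v$ meet at equal angles of $2\pi/k$. Repeating the construction at every vertex prescribes, for each edge $e = vv'$ of $G$ dual to a tangency $t$ of dual disks $D$ and $D'$, two tangent directions (one at $v$ from $M_f$, one at $v'$ from $M_{f'}$) that together must determine $e$ as a single circular arc through $v$ and $v'$ passing through $t$.

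The main obstacle I expect is the consistency check that the two tangent-direction prescriptions, from $M_f$ and from $M_{f'}$, are realized by one and the same circle: naively, an arc from $v$ with prescribed initial tangent is over-determined upon arrival at $v'$. I would resolve this by describing the edge arc intrinsically as the unique circle through $v$, $v'$, and $t$ that is tangent at $t$ to the common tangent line of $D$ and $D'$; this description is symmetric between $f$ and $f'$ and depends only on the packing. I would then verify in both symmetric models that the circular image under $M_f^{-1}$ (respectively $M_{f'}^{-1}$) of the relevant radial segment coincides with this intrinsic arc, reducing the check to the local geometry of a single wheel, which is covered by the explicit calculation above. Planarity then follows because each edge arc lies in the union of the two adjacent packing holes and is separated from other edge arcs by the disks of $G^*$, while the equal-angle Lombardi condition holds at every vertex by construction.
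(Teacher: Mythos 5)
Your overall plan (vertices at the pulled-back symmetry centers of the packing of \autoref{lem:cpack}, edges as pulled-back radial segments through the tangency points) does reproduce the drawing the paper gets, but the paper obtains it differently: it reuses the minimization diagram of the tangent-disk distance from the cubic Lombardi construction~\cite{Epp-DCG-14}, citing that the cell boundaries of that diagram are single circular arcs and that the diagram is M\"obius-invariant, so the equal-angle property follows from the symmetry of each face's model. Your write-up has a genuine gap at exactly the point you flag as the main obstacle. The two halves of an edge $vv'$ --- the $M_f^{-1}$-image of the radial segment on the $f$ side and the $M_{f'}^{-1}$-image on the $f'$ side --- are governed by two \emph{different} faces' M\"obius transformations, so no verification carried out ``in the local geometry of a single wheel'' can certify that they lie on one circle: within the symmetric model of $f$ you have no control over where $v'$ sits. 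The single-wheel computation you describe only shows that each half-arc is tangent at the tangency point $t$ to the common tangent of $D$ and $D'$; two circular arcs tangent at a point need not lie on the same circle (their curvatures can differ). Likewise, your ``intrinsic'' description presupposes the existence of a circle through $v$, $v'$, and $t$ that is tangent at $t$ to that common tangent --- a circle through three points is already determined, so the tangency is an extra condition, and its validity is precisely what has to be proved.

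The statement is true and your construction can be repaired, for example by a direct argument: invert at $t$, so that $\partial D$ and $\partial D'$ become two parallel lines. Any M\"obius transformation carrying two \emph{congruent} tangent disks to these two half-planes (i.e.\ sending their tangency to $\infty$) carries their common tangent line onto the midline of the strip, since it is the standard inversion at the tangency followed by a similarity. Applying this to $\iota\circ M_f^{-1}$ and to $\iota\circ M_{f'}^{-1}$, where $\iota$ is the inversion at $t$, shows that the images of both half-arcs lie on the one midline of the strip bounded by $\iota(\partial D)$ and $\iota(\partial D')$; hence both halves lie on the single circle $\iota^{-1}(\text{midline})$ through $t$, which is what you wanted. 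Alternatively you can do what the paper does and quote the facts that the bisector of two tangent disks under the tangent-disk distance is a single circle and that the minimization diagram is M\"obius-invariant; in each face's symmetric model the bisectors restricted to the central gap are exactly your radial segments, so the consistency comes for free. With either repair (and with the small clarification that for the face of $G^*$ whose gap is unbounded you take the fixed point of the rotational symmetry lying in that gap, rather than $c_f$ itself), your concluding planarity argument goes through.
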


\begin{proof}
A construction of the author for Lombardi drawings of cubic graphs~\cite{Epp-DCG-14} forms a disk representation for the dual graph.
It defines a distance function from points to these disks, where the distance from point $p$ to disk $D$ is the radius of two congruent disks that are tangent to each other at $p$ and also both tangent to $D$, and constructs the minimization diagram of this distance, a partition of the plane into cells within which one of the disks is closer than all the others~\cite[Sec.~3]{Epp-DCG-14}. This minimization diagram has circular arcs for boundaries~\cite[Lem.~2]{Epp-DCG-14}, which in the case of cubic graphs necessarily meet at angles of $2\pi/3$, forming a Lombardi drawing of the original graph. It is invariant under M\"obius transformations of the plane: transforming a circle packing and then constructing the minimization diagram, or constructing the diagram first and then transforming it, produces the same result~\cite[Lem.~1]{Epp-DCG-14}.

For a D3-reducible graph we use the same minimization diagram for the circle packing of \autoref{lem:cpack}.
The resulting minimization diagram again has piecewise-circular boundaries between cells and is invariant under M\"obius transformations of the plane. By symmetry and M\"obius invariance, these boundaries must meet at equal angles at a point within each face of the dual graph, forming a Lombardi drawing of the primal D3-reducible graph.
\end{proof}

An example of a Lombardi drawing constructed in this way for the graph of the truncated tetrahedron is shown in \autoref{fig:trunctet}. This graph has all vertices of degree three, a property already known to guarantee the existence of a Lombardi drawing~\cite{Epp-DCG-14}, but the same method works as well for D3-reducible graphs with vertices of higher degree. However, it is not true that the duals of D3-reducible graphs always have planar Lombardi drawings; indeed, it is known that some planar 3-trees (a special case of the duals of D3-reducible graphs) do not have such drawings~\cite{DunEppGoo-GD-11}.

\section{Halin graph recognition}

We return to our motivating problem of developing a simple algorithm for Halin graph recognition.
We have already seen that the D3 reductions will apply to any Halin graph; however, they also apply to some graphs that are not Halin graphs, so we need to modify the reduction process to avoid
these more general reductions. The key observation is the following:

\begin{observation}
\label{obs:reduction-is-at-leaves}
Let $G$ be a Halin graph, constructed from a tree $T$ with outer cycle $C$. Then every D3a reduction in $G$ must form a simpler Halin graph by removing the two children from a node of $T$ that has only two children, both leaves, and every D3b reduction must form a simpler graph by removing a middle leaf child from a tree node that has three consecutive leaves among its children.
\end{observation}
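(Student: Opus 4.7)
The plan is to analyze every D3 configuration using the decomposition $G = T \cup C$ together with the fact that each edge of $G$ is either a tree edge or a cycle edge. I would first classify the triangles of $G$: a triangle cannot use three tree edges (since $T$ is acyclic), cannot use two cycle edges and one tree edge (since the two leaves incident to the tree edge would need to share a tree adjacency, which does not occur among leaves of $T$), and can use three cycle edges only when $C$ has length three, in which case $G = K_4$ and no D3 reduction applies. Hence every triangle of a non-trivial Halin graph $G$ has exactly one cycle edge $pq$ between consecutive leaves of $C$ and two tree edges $pr, qr$ joining them to their common tree-parent $r$.

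For a D3a reduction, the apex $r$ of the triangle must have tree-degree three. If $r$ has a tree-parent then $r$'s only two children are the leaves $p$ and $q$, as the observation requires. If instead $r$ is the root, then it has three children $p$, $q$, and some $s$; in that case $s$ cannot be a leaf (otherwise $G = K_4$), so re-rooting $T$ at the internal node $s$ yields a tree in which $r$ is not the root and $r$'s only two children are again $p$ and $q$. With $T$ in this form, the reduction collapses $\{p,q,r\}$ to a single new leaf $t$ attached to $r$'s non-leaf neighbor, the cycle is updated by bypassing $p$ and $q$ through $t$, and only two leaves are removed and one is created, so the reduced tree still has no degree-two vertex and the reduced graph is a simpler Halin graph.

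For a D3b reduction I would take the induced path $pqr$ with apex $s$ and first argue that $s$ is an internal tree node. A leaf has two cycle neighbors and one tree neighbor, and a brief case check on which of $\{p,q,r\}$ plays the role of $s$'s tree-parent in that situation either forces two cycle-non-adjacent leaves to share a tree edge (impossible) or collapses the configuration to $K_4$, which admits no D3b. Given $s$ internal, the three edges $ps, qs, rs$ are tree edges. None of $p, q, r$ can be $s$'s tree-parent, since combining such an assignment with $pq$ or $qr$ would create a three-cycle in $T$; hence $p, q, r$ are all children of $s$. Because siblings in $T$ are never tree-adjacent, the edges $pq$ and $qr$ must be cycle edges, forcing $p, q, r$ to be leaves consecutive on $C$ with $q$ in the middle. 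Removing $q$ and replacing $pq, qr$ by a single cycle edge $pr$ then produces a simpler Halin graph.

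The only remaining detail is that the reduced tree in the D3b case has no degree-two vertex: $s$'s tree-degree drops by one, and if $s$ had tree-degree exactly three its three tree-neighbors would be $p, q, r$, giving $G = K_4$ (where no D3b applies). The main subtlety in the whole argument is the rooting issue in the D3a analysis when the apex coincides with the chosen root of $T$; once this is handled by re-rooting at an internal neighbor, the rest of the proof is a routine case analysis driven by the Halin edge classification.
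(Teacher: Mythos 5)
Your proof is correct and follows essentially the same route as the paper, whose entire justification is the single remark that both reductions require a triangle and that every triangle of a Halin graph consists of two leaf edges of $T$ together with one edge of $C$ --- precisely your edge-classification of triangles. You simply carry out the resulting case analysis in full (the rooting technicality at the apex, the $K_4$ degeneracies, and the check that no degree-two tree vertex is created), all of which the paper leaves implicit.
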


A Halin graph may have more than one decomposition into a tree and a cycle, but if so this observation applies simultaneously to all of these decompositions. The reason the observation holds is that both the D3a and D3b reductions require the presence of a triangle in $G$, and the only triangles in a Halin graph can be the ones formed by two leaf edges of $T$ and an edge of $C$.

Intuitively, whenever we perform a reduction in a Halin graph, \autoref{obs:reduction-is-at-leaves} gives us more information about the set of vertices that belong to the outer cycle. We will use this information to check whether an arbitrary D3-reducible graph is Halin.
Our algorithm for testing whether a graph $G$ is Halin follows the same outline as the algorithm for testing whether $G$ is D3-reducible, with the following modifications:
\begin{itemize}
\item We maintain a set $K$ of vertices that are known to belong to the outer cycle of a Halin graph representation of $G$; initially, $K$ is empty.
\item Whenever we perform a D3a reduction of a triangle $uvw$, replacing it by a vertex $x$, we first check whether all three of $u$, $v$, and $w$ belong to $K$; if they do, we forbid this reduction.
Otherwise we perform the reduction and then add $x$ to $K$. Additionally, if any one or two of $u$, $v$, or $w$ were already members of $K$, we add the neighbor or neighbors of these known-outer vertices to $K$ as well. Examples of this reduction are shown by the two rightmost arrows in \autoref{fig:nonhalin}.
\item Whenever we perform a D3b reduction of a path $uvw$ with apex $x$, removing the middle vertex $v$, we check whether $x$ is in $K$; if so, we forbid the reduction. Otherwise, we add $u$ and $w$ to $K$. Additionally, if the reduction reduces the graph to $K_4$, we add the fourth vertex (the one that is not $x$) to $K$. Examples of this reduction are shown by the left and upper middle arrows in \autoref{fig:nonhalin}. The lower right graph of the figure gives an example in which there are four potential D3b reductions, but all are forbidden because their shared apex is in~$K$.
\item When an irreducible graph is reached, as well as checking that it is isomorphic to $K_4$, we check that it has at least one vertex that does not belong to $K$. If so, we recognize it as a Halin graph; otherwise we do not. The upper right graph of \autoref{fig:nonhalin} gives an example in which the recognition algorithm can reach a $K_4$ graph but fails to recognize the graph as Halin because all vertices belong to~$K$.
\end{itemize}

\begin{figure}[t]
\centering
\includegraphics[scale=0.5]{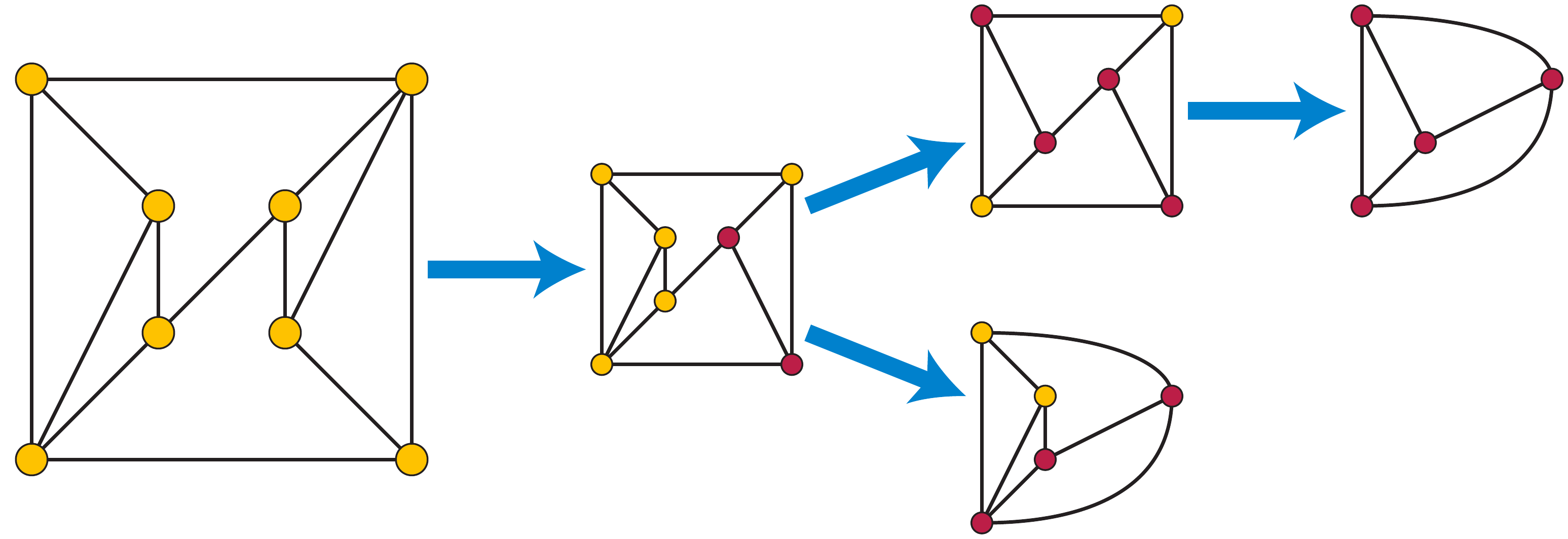}
\caption{A D3-reducible graph that is not Halin and its possible reduction sequences (up to isomorphism) under the Halin graph recognition algorithm. The set of known-outer vertices is shown in red.}
\label{fig:nonhalin}
\end{figure}

\begin{lemma}
\label{lem:Halin-subset}
If $G$ is a Halin graph then the
modified algorithm described above will never forbid a reduction. The graph remaining after each step will also be Halin, and the intersection of that graph with $K$ will consist only of vertices that belong to the outer cycle of every decomposition of this graph into a tree and a cycle.
\end{lemma}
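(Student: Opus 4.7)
The plan is to proceed by induction on the number of successful reductions performed by the algorithm, maintaining the three invariants that (a) no prior reduction was forbidden, (b) the current graph is Halin, and (c) $K$ consists only of vertices lying on the outer cycle in every decomposition of the current graph into a tree and a cycle. The base case is immediate: before any reduction the graph is Halin by hypothesis, and $K=\emptyset$. The key tool throughout is Observation \ref{obs:reduction-is-at-leaves}, which, because the observation applies to every decomposition of a Halin graph simultaneously, pins down the tree-theoretic meaning of a D3 reduction in all decompositions at once.

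For the inductive step, suppose the current graph $G'$ is Halin and $K$ satisfies invariant (c). If the algorithm attempts a D3a reduction on a triangle $uvw$, then by the observation one of the three vertices plays the role of a two-leaf parent in every decomposition, while the other two are cycle leaves. The parent is therefore never in $K$, so at most two of $u,v,w$ belong to $K$ and the reduction is not forbidden. The reduced graph $H$ is precisely the Halin graph obtained by pruning the two leaves and promoting the parent to a leaf, so $H$ is Halin and the new vertex $x$ lies on its outer cycle; moreover, any outside neighbor of a triangle vertex that was already in $K$ was a cycle neighbor (in $G'$) of a leaf of the triangle and becomes a cycle neighbor of $x$ in $H$. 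Hence updating $K$ as prescribed preserves invariant (c). The D3b case is analogous: the apex $s$ is the shared parent of three consecutive sibling leaves, hence not on the outer cycle in any decomposition, so $s\notin K$ and the reduction is not forbidden; removing the middle leaf $v$ yields a smaller Halin graph (which is $K_4$ exactly when the original tree was $K_{1,4}$, in which case the fourth vertex is indeed also a leaf on the cycle), and the endpoints $u,w$ remain cycle leaves in every decomposition.

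The main obstacle is to justify the quantifier \emph{every decomposition} on the reduced-graph side: I must argue that, for any decomposition of $H$, the newly added vertices ($x$ in the D3a case, or $u$ and $w$ in the D3b case) lie on its outer cycle. The plan is to lift: any decomposition of $H$ extends to a decomposition of $G'$ by inverting the reduction (splitting $x$ back into the triangle $uvw$, or subdividing the edge $uw$ with the leaf $v$ attached to $s$), and Observation \ref{obs:reduction-is-at-leaves}, reapplied to this lifted decomposition, forces the original leaf-parent structure, which in turn pushes $x$ (respectively $u$ and $w$) onto the outer cycle of $H$. A useful technical aid is that every triangle in a Halin graph is facial, which in the D3b case constrains the triangle $uws$ of $H$ to be a face and hence severely limits how any decomposition of $H$ can distribute the tree/cycle roles among $u$, $w$, and $s$. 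Once these lifting arguments are established, the invariants carry forward, the algorithm never forbids a reduction, and $K$ continues to be a subset of the outer cycle in every decomposition.
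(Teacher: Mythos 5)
Your overall plan---induction on the number of reductions, \autoref{obs:reduction-is-at-leaves} to identify the two pruned vertices as leaf children and their parent (or the apex as the parent of three sibling leaves), hence the parent/apex is never in $K$ and no reduction is forbidden, and the pruned decomposition witnesses that the remaining graph is Halin---is exactly the paper's argument, and that part of your write-up is fine. The problem is the step you yourself single out as the main obstacle. Your lifting argument does not work: a decomposition of the reduced graph $H$ in which the new vertex $x$ is an \emph{interior} tree vertex simply does not lift to a tree--cycle decomposition of $G$ (splitting $x$ back into the triangle $uvw$, the three triangle edges can be neither tree edges, which would create a cycle in the tree, nor cycle edges, since the leaf cycle of the lifted decomposition would not visit $u$, $v$, $w$), and the failure of your particular lifting construction gives no contradiction about the decomposition of $H$ you started from. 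Worse, the strong ``every decomposition'' statement you are trying to force in this way is false. Take the Halin graph built from the tree with internal vertices $p$ (children $a,b$), $q$ (children $c,d$) and $a$ (children $a_1,a_2$), with leaf cycle $a_1a_2bcd$. The algorithm may begin with the D3a collapse of the triangle $aa_1a_2$ to a vertex $x$; the remaining graph is the triangular prism and $K=\{x\}$, but the prism also has a decomposition whose outer cycle is the quadrilateral face $pbcq$, which does not contain $x$. The same phenomenon occurs whenever a reduction produces $K_4$: a vertex just added to $K$ fails to lie on the cycle of the $K_4$ decomposition centered at it.

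The paper's proof quietly sidesteps this: it only argues about the reduced version of a decomposition $T$ carried along from the original graph (``the contracted supervertex becomes a leaf in the reduced version of $T$,'' ``$u$ and $w$ remain leaf children of $x$''), i.e., it establishes the invariant that $K$ lies on the leaf cycle of the decomposition obtained by pushing a decomposition of the input through the reductions performed so far---not of every decomposition of the current graph. That weaker invariant is all that is needed: the parent/apex of the carried decomposition lies off its cycle, hence outside $K$, so no reduction is forbidden, and in \autoref{cor:no-false-neg} one only needs $K$ to be contained in a single face of the final $K_4$. So the repair is not to strengthen your lifting argument but to weaken the invariant you maintain to this ``carried decomposition'' (or ``some decomposition'') form; with that change the rest of your argument goes through and coincides with the paper's proof.
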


\begin{proof}
We prove the lemma by induction on the number of steps of the algorithm; initially $K$ is empty and the result holds vacuously.

At each D3a reduction, in which a triangle $uvw$ is contracted, one of the triangle vertices (say $u$) must be the parent of the other two vertices $v$ and $w$, which must be leaves in $T$. By the induction hypothesis, $u$ is not in $K$ prior to the D3a reduction so the reduction will not be forbidden. After the reduction, the removal of two leaves causes the contracted supervertex to become a leaf in the reduced version of $T$, so adding it to $K$ is valid. And the only edges in $T$ incident to $v$ and $w$ are the ones connecting them to their parent $u$, so if this step causes the neighbors of $v$ and $w$ to become added to $K$ the result is again valid.

At each D3b reduction, in which a path $uvw$ with apex $x$ is shortened, the three vertices $u$, $v$, and $w$ must all be children of $x$ in $T$. By the induction hypotheses $x$ will not belong to $K$ and the reduction will not be forbidden. Vertices $u$ and $w$ remain leaf children of $x$ after the reduction, so adding $u$ and $w$ to $K$ is valid.
\end{proof}

\begin{corollary}
\label{cor:no-false-neg}
If $G$ is a Halin graph then the algorithm described above will correctly recognize $G$ as being a Halin graph.
\end{corollary}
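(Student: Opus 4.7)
The plan is to derive the corollary almost immediately from \autoref{lem:Halin-subset}, together with a short argument about the final base-case check. First I would invoke the lemma: since $G$ is Halin, at every step the algorithm finds an applicable reduction that is not forbidden, and the smaller graph produced remains Halin. Because each D3 reduction strictly decreases the number of vertices, the algorithm must terminate at some irreducible graph $G^\ast$, which by the lemma is still Halin.

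Next I would argue that $G^\ast \cong K_4$. The simplest route is \autoref{obs:reduction-is-at-leaves}: any Halin graph built from a tree with more than one internal node has a parent of two leaf children or a parent of three consecutive leaf children, furnishing either a D3a or a D3b reduction; hence an irreducible Halin graph has a tree with a single internal node, so its tree is $K_{1,3}$ and the graph is $K_4$. So the first half of the base-case test (isomorphism to $K_4$) always passes when $G$ is Halin.

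The only remaining thing to check is that the algorithm's extra condition — that at least one vertex of the final $K_4$ lies outside $K$ — also passes. Here I use the second conclusion of \autoref{lem:Halin-subset}: every vertex of $G^\ast$ that sits in $K$ must lie on the outer cycle of every tree-cycle decomposition of $G^\ast$. But for $K_4$, each of the four vertices is the internal (apex) vertex in some valid decomposition, so no vertex can belong to the outer cycle of every decomposition. Consequently $K \cap V(G^\ast) = \emptyset$, and in particular $G^\ast$ contains a vertex outside $K$, so the algorithm accepts.

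The main obstacle, which is already handled by \autoref{lem:Halin-subset}, is the invariant that $K$ never accidentally swallows a non-outer-cycle vertex: once that invariant is in hand, the corollary is essentially a termination-plus-base-case observation, and the only subtlety is recognizing that in $K_4$ the ``outer cycle of every decomposition'' condition is unsatisfiable, guaranteeing the non-$K$ witness the algorithm demands.
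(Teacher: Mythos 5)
Your overall structure matches the paper's: invoke \autoref{lem:Halin-subset} to conclude that no reduction is ever forbidden and every intermediate graph is Halin, observe that the irreducible graph $G^\ast$ reached by the algorithm must be $K_4$, and then verify the algorithm's extra base-case condition. The first two steps are fine (the paper leaves ``irreducible Halin implies $K_4$'' implicit, resting on the Section 2 discussion of reducing trees rather than on \autoref{obs:reduction-is-at-leaves}, which constrains what reductions look like rather than asserting that one exists; your argument that a Halin graph with more than one internal tree node always admits a D3a or D3b reduction is the right way to make this explicit).

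The last step, however, contains a genuine flaw. You conclude $K\cap V(G^\ast)=\emptyset$ on the grounds that no vertex of $K_4$ lies on the outer cycle of \emph{every} tree--cycle decomposition. That conclusion contradicts the algorithm's explicit behaviour: unless $G$ is itself $K_4$, the final reduction deposits vertices of the terminal $K_4$ into $K$ (a final D3a adds the new supervertex; a final D3b adds $u$, $w$, and, by the special $K_4$ clause, the fourth non-apex vertex). For instance, reducing the $5$-vertex wheel leaves three of the four vertices of the resulting $K_4$ in $K$. What this reveals is that the ``every decomposition'' phrasing of \autoref{lem:Halin-subset} cannot be applied literally at the $K_4$ stage, where the decomposition is no longer unique; what the lemma's proof actually establishes is that the vertices of $K$ are leaves of the decomposition inherited through the reduction sequence, i.e.\ $K$ restricted to the current graph is contained in the leaf cycle of \emph{some} decomposition. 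That weaker invariant is what the paper's proof uses: in the final $K_4$ the leaf cycle is a triangular face, so $K$ is contained in a face of $K_4$ and at least one vertex (the hub of that decomposition) lies outside $K$, which is exactly the witness the acceptance test requires. Your proof goes through once the claim $K\cap V(G^\ast)=\emptyset$ is replaced by this containment-in-a-face argument; as written, that step is false.
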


\begin{proof}
By Lemma~\ref{lem:Halin-subset}, $G$ will be reduced to an irreducible graph, which must be $K_4$, and this graph must have a face that forms a superset of $K$.
Therefore, there will be at least one vertex of the irreducible graph that is not in $K$, so the termination condition of the algorithm is met and the algorithm will necessarily recognize $G$ as Halin.
\end{proof}

\begin{lemma}
\label{lem:no-false-pos}
If $G$ is recognized by the algorithm described above, then it is indeed a Halin graph, and has a decomposition into a tree $T$ and a cycle $C$ in which the vertices of $K$ all belong to the cycle. \end{lemma}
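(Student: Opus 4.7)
The plan is reverse induction along the algorithm's reduction sequence. Write $G = H_0, H_1, \ldots, H_N = K_4$ for the successive graphs and $K_i$ for the known-outer set after $i$ reductions; I will show by induction on $i$, decreasing from $N$ to $0$, that $H_i$ admits a Halin decomposition $(T_i, C_i)$ with $K_i \subseteq C_i$. The lemma then corresponds to $i = 0$. For the base case $i = N$, the acceptance condition supplies a vertex $z \notin K_N$, which we take as the sole internal vertex of the decomposition of $K_4$.

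For the inductive step from $H_i$ to $H_{i-1}$, suppose first that the $i$th reduction was D3a, collapsing triangle $uvw$ of $H_{i-1}$ to a vertex $x$ of $H_i$. Since the algorithm always adds $x$ to $K$, we have $x \in K_i \subseteq C_i$, so $x$ is a leaf of $T_i$ with a unique tree parent $y$ and cycle neighbors $a,b$. Relabel $u,v,w$ so that $u$'s outside neighbor is $y$. I claim $u \notin K_{i-1}$: otherwise, by the algorithm's propagation rule the outside neighbor $y$ would have been added to $K_i$, contradicting the structural fact that two leaves of a Halin tree are never adjacent (so $y$ is internal in $T_i$, hence $y \notin C_i$, hence $y \notin K_i$). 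Now build $(T_{i-1}, C_{i-1})$ by inserting $u$ as an internal node attached to $y$, with $v$ and $w$ as its new leaf children, and rerouting the cycle through $a, v, w, b$ in place of $x$. Any triangle vertex in $K_{i-1}$ lies in $\{v,w\}\subseteq C_{i-1}$, and any other vertex of $K_{i-1}$ inherits cycle membership from $K_i \setminus \{x\} \subseteq C_i \setminus \{x\} \subseteq C_{i-1}$.

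Now suppose the $i$th reduction was D3b, removing the middle vertex $q$ of path $pqr$ with apex $s$. Then $p, r \in K_i \subseteq C_i$, so both are leaves of $T_i$ joined in $H_i$ by the replacement edge $pr$; since leaves of $T_i$ are non-adjacent in $T_i$, this edge must be a cycle edge of $C_i$, placing $p$ and $r$ consecutively on the cycle. The algorithm's rule forbids $s \in K_i$, and I will choose a decomposition of $H_i$ in which $s$ is internal: this is automatic when $s$ has degree at least four, and for degree three a short degree count in $T_i$ shows the only obstruction would be $H_i = K_4$, where we are free to pick the $s$-internal decomposition of $K_4$ (since $s \notin K_i$). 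With $s$ internal, it must be the common tree parent of $p$ and $r$, and $(T_{i-1}, C_{i-1})$ is obtained by attaching $q$ as a third leaf child of $s$ that subdivides the cycle edge $pr$. Then $K_{i-1} \cap \{p, q, r\}$ lies on the new cycle segment $p$--$q$--$r$, and the remaining members of $K_{i-1}$ sit on $C_{i-1}$ by the same inheritance argument as before.

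The main obstacle will be the D3a step: bridging the algorithm's purely local bookkeeping (which propagates $K$-membership from a triangle vertex to its outside neighbor) with the global structural fact that the tree parent of a Halin leaf is never itself a cycle vertex. This coupling is what lets us pin down a triangle vertex that is safely outside $K_{i-1}$ and can be promoted to the internal role of the reversed triangle. The D3b case is more routine, requiring only the small observation that some decomposition of $H_i$ compatible with the inductive invariant places the apex internally.
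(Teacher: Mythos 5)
Your proof is correct and follows essentially the same route as the paper: induction along the reduction sequence, reversing each reduction to extend a Halin decomposition of the smaller graph, using the $K$-propagation rule to identify the unique triangle vertex that can safely become internal in the D3a case, and forcing the apex to be internal in the D3b case. The only cosmetic difference is that for D3b you handle the degenerate $K_4$ situation by re-choosing the decomposition with the apex internal (justified since the apex is never in $K$), whereas the paper rules that situation out directly via the special handling that adds the fourth vertex to $K$; both arguments work.
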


\begin{proof}
We prove the result by induction on the size of $G$. If $G$ is irreducible, it can only be recognized if it is $K_4$, which is indeed a Halin graph. Otherwise, suppose that $X$ is the first reduction found by the algorithm, and let $H$ be the smaller graph formed from $G$ by reduction $X$. By induction, $H$ is Halin, with a decomposition into a tree $T'$ and cycle $C'$ with $K\cap H\subset C'$. We have the following cases:
\begin{itemize}
\item If $X$ is a D3a reduction of triangle $uvw$, replacing these three vertices by a single vertex $x$, then (because the algorithm adds $x$ to $K$) $x$ must belong to $C'$, and must form a leaf of the tree $T'$.  Two edges $xu'$ and $xv'$ must belong to $C'$, and the third edge $xw'$ cannot (because a cycle has degree two at each vertex). To form a Halin graph decomposition of $G$, we replace $x$ in $T'$ by $w$, and add $u$ and $v$ as children of $w$ to form the tree $T$. We form the cycle $C$ by replacing the edges $xu'$ and $xv'$ in $C'$ by the three edges $u'u$, $uv$, and $vv'$. The resulting tree and cycle decompose $G$ in the manner required of a Halin graph, so $G$ is Halin.

Vertices $u$ and $v$ belong to the cycle $C$ but vertex $w$ does not. Vertex $w$ cannot have been part of $K$ prior to performing reduction $X$,
because if it were then in $H$ vertices $x$ and $w'$ would both belong to $K$, forcing edge $xw'$ to belong to $C'$ (because $H$ is Halin and in a Halin graph every edge between leaf vertices belongs to the outer cycle) contradictory to our assumption. Therefore, in $G$ it remains true that the vertices of $K$ all belong to cycle $C$.
\item If $X$ is a D3b reduction of path $uvw$ with apex $x$, removing $v$ and shortening the path,
then after the reduction $u$ and $w$ belong to $K$, so they must both be leaf vertices of $T'$. The edge $uw$ connecting them must belong to $C'$. The other two edges $ux$ and $wx$ of the triangle $uwx$ in $H$ must be leaf edges of $T'$, for the only other possibility (that together with $uw$ they form the outer cycle of a $K_4$ graph) is prevented by the special handling of a $K_4$ in the D3b reduction. We form the cycle $C$ by replacing edge $uw$ by the path $uvw$, and we form the cycle $T$ by adding $v$ as a leaf child of~$x$. The resulting tree and cycle decompose $G$ in the manner required of a Halin graph, so $G$ is Halin.

In this case, all vertices other than $v$ either belong to both $C$ and $C'$, or belong to neither.
Therefore, the condition that $K$ is a subset of $C$ follows from the induction hypothesis that $K\cap H$ is a subset of $C'$, together with the fact that the construction places $v$~in~$C$.\qedhere
\end{itemize}
\end{proof}

\begin{theorem}
The algorithm described above correctly recognizes Halin graphs in linear time,
and can be modified to construct a decomposition of a Halin graph into a tree and a cycle in linear time.
\end{theorem}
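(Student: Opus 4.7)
The plan is to combine the two correctness results already established with a bookkeeping argument for linear time and a constructive reading of the no-false-positive proof for the decomposition task.

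For correctness, \autoref{cor:no-false-neg} shows that every Halin graph is accepted and \autoref{lem:no-false-pos} shows that every accepted graph is Halin, so the algorithm is a correct decision procedure. For the time bound I would keep the adjacency-list representation and worklist $C$ used in the proof of the earlier D3-reducibility theorem, and represent $K$ by a Boolean flag on each vertex so that membership tests and insertions take $O(1)$. Each D3a step performs a three-vertex membership test plus at most four insertions (the new supervertex and up to three outside neighbors of already-known triangle vertices); each D3b step performs a one-vertex apex test plus at most three insertions (the two endpoints $u,w$ of the path and, in the $K_4$ special case, the remaining vertex). Since each vertex is inserted into $K$ at most once, the total $K$-maintenance over the whole execution is $O(n)$. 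All remaining bookkeeping coincides with that of the D3-reducibility algorithm whose linear running time was already established, so the modified algorithm also runs in $O(n)$ time.

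For the decomposition I would have the algorithm record a reduction log: for each D3a step, the triangle $\{p,q,r\}$ together with its three outside neighbors; for each D3b step, the path $pqr$ and the apex $s$. When acceptance occurs, the remaining graph is $K_4$ with at least one vertex $\rho$ not in $K$; initialize $T$ as the star $K_{1,3}$ with center $\rho$ and the three remaining vertices as leaves, and initialize $C$ as the triangle on those leaves. I then process the log in reverse, inverting each reduction exactly as in the proof of \autoref{lem:no-false-pos}: a D3a inversion locates the unique non-cycle neighbor $w'$ of the current supervertex $x$ and expands $x$ into a triangle $uvw$, with $w$ attached to $w'$ in $T$ and with $u,v$ spliced into $C$ by replacing $xu',xv'$ with $u'u, uv, vv'$; a D3b inversion splices the removed middle vertex $v$ back into $C$ between $u$ and $w$ and attaches $v$ to the apex $x$ as a new tree child. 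With $C$ stored as a doubly-linked list of cycle edges and $T$ stored by parent pointers, each inversion is $O(1)$, giving $O(n)$ total.

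The main obstacle is the constant-time identification during a D3a inversion of which of $x$'s three incident edges is the tree-parent edge as opposed to the two cycle edges. I would handle this by carrying a single bit per edge that marks whether the edge is on the current cycle, initialized correctly on the three cycle edges of the final $K_4$ and updated by each inversion (the two freshly spliced cycle edges get marked, the newly-added tree edge does not). Then the tree-parent edge of $x$ is simply the unique unmarked edge among its three, found by inspecting $x$'s constant-size neighborhood.
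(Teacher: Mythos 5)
Your proposal is correct and follows essentially the same route as the paper: correctness from \autoref{cor:no-false-neg} and \autoref{lem:no-false-pos}, linear time because the modifications add only constant work per reduction to the D3-reducibility algorithm, and the decomposition obtained by reversing the recorded reduction sequence while maintaining the tree and cycle as in the proof of \autoref{lem:no-false-pos}. Your extra bookkeeping details (flagging membership in $K$, marking cycle edges to invert D3a steps in constant time) are implementation refinements of the same argument, which the paper states more tersely.
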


\begin{proof}
The correctness of the algorithm follows from \autoref{cor:no-false-neg} and \autoref{lem:no-false-pos}. The modifications to the D3-reducibility algorithm add constant time per reduction so the time analysis is the same as for testing D3-reducibility. To construct a decomposition, we reverse the steps of the reduction and use \autoref{lem:no-false-pos} to maintain at each step of the reversed sequence a decomposition of the Halin graph from that step of the sequence; again, this adds constant time per step of the reduction.
\end{proof}

\section{Implementation}

To support our claim that the reduction-based method described here leads to simple and implementable algorithms, we developed a proof-of-concept implementation of our algorithms in the Python programming language, including the algorithms for testing D3-reducibility, finding Hamiltonian cycles in D3-reducible graphs, testing whether a graph is Halin, and finding the set of leaf nodes of an (arbitrarily chosen) decomposition of a Halin graph into a tree and a cycle.

\subsection{Graph representation}

To support constant-time graph reduction operations, adjacency tests, neighbor listing operations, and neighbor counting operations, we use a modified version of a graph representation scheme suggested by van Rossum~\cite{Ros-PP-98}. In van Rossum's representation, a graph is a Python dictionary object (a hash table) with vertices as its keys and with Python lists (dynamic arrays) of neighboring vertices as the associated values. There is no need for special vertex objects: vertices in this representation are allowed to be any type of object that can be used as keys in a dictionary, such as integers or strings.

However, a Python list does not allow constant-time membership testing, nor the constant-time removal of a vertex from a list of neighbors without knowing its position in the list. Therefore, we modify van Rossum's representation by
representing a graph as a dictionary with the vertices as keys and with Python sets of neighboring vertices as the associated values. The set data type was introduced to Python subsequently to van Rossum's original proposal for this representation. Using sets in place of lists allows more flexible and fast addition, removal, and membership testing in each vertex neighborhood.

\subsection{Software architecture}

In order to maximize the code re-use of our implementation, we designed it to have a central core that finds and performs D3 reductions on a graph, and that takes as arguments callback routines that either modify the sequence of reductions that can be performed or record information about the reductions as they are performed.

More specifically, the main subroutine of our implementation takes three arguments: a list of \emph{triangle hooks}, another list of \emph{path hooks}, and a \emph{finalizer}. These arguments have the following meanings:

\begin{itemize}
\item The triangle hooks are a list of subroutines that are called, in the order given by the list, before performing any D3a reduction. These subroutines take the graph and seven vertices as arguments: the six vertices forming the configuration to be reduced, and a seventh vertex that will replace the central triangle in this configuration. They return a Boolean value, true if the reduction should be allowed to happen and false otherwise. If any triangle hook returns false, the remaining ones on the list are not called; otherwise, all are called. As well as being used to constrain which D3a reductions occur, these hooks may also be used to record information about the sequence of reductions performed by the algorithm.
\item The path hooks are another list of subroutines, used in the same way for D3b reductions as the triangle hooks are used for D3a reductions. They each take five arguments: the graph, three path vertices and apex of a D3b reduction.
\item The finalizer takes as input the irreducible graph after all reductions are complete, and produces as output the value that should be returned as the result of the overall computation.
\end{itemize}

\noindent
We  implemented several additional sets of subroutines to be used as these arguments:
\begin{itemize}
\item To recognize Halin graphs, we use triangle and path hooks that maintain a set of vertices required to be part of the outer cycle, and that prevent reductions inconsistent with this requirement.
We use a finalizer that returns true when the graph can be reduced to $K_4$ without requiring all four vertices to be part of the outer cycle, and false otherwise.
\item We also implemented alternative recognition algorithms based on D3 reductions for testing whether a given graph is the dual graph of a planar 3-tree, or whether it is a wheel. These algorithms use trivial triangle or path hooks that prevent any D3b reduction in the case of dual 3-trees, or that prevent any D3a reduction in the case of wheels, together with a finalizer that merely checks whether the reduced graph is $K_4$.
\item We implemented a pair of triangle and path hooks that record a sequence of D3 reductions. We use these hooks as part of a subroutine that recursively calls any D3-reduction based recognition algorithm (such as our Halin graph recognition algorithm), reverses the recorded sequence of reductions made during the algorithm, and then calls a given pair of triangle and path functions (with the same arguments as the triangle and path hooks) in the order given by the reversed sequence. This can be used to inductively construct structures associated with Halin or D3-reducible graphs.
\item We implemented a method for finding the set of leaf vertices in a decomposition of a Halin graph into a cycle and a tree, using our subroutine for inductive construction together with additional triangle and path subroutines that update this set of leaf vertices through the reversal of any D3 reduction. For a Halin graph that has more than one valid decomposition, one is chosen arbitrarily.
\item We also implemented another method for constructing a Hamiltonian cycle in a D3-reducible graph, again using our subroutine for inductive construction together with additional triangle and path subroutines that update the Hamiltonian cycle through the reversal of any D3 reduction.
\end{itemize}

\subsection{Code size and testing}

Open-source Python code for our implementation is available online at
\url{http://www.ics.uci.edu/~eppstein/PADS/Halin.py}.

In our implementation, not counting comments, whitespace, and sanity checks, the basic D3 reducibility test takes 65 lines of code, and the subroutines to record and reverse a sequence of reductions take 12 lines of code. The additional subroutines for Halin graph recognition take 26 lines of code, the subroutines for finding the leaf vertices of a Halin graph take 15 lines of code, and the subroutines for constructing a Hamiltonian cycle take 28 lines of code. We believe that this code size is substantially smaller than what would be required for a Halin graph recognition algorithm based on general linear-time planarity testing methods.

We checked the correctness of our implementations by unit tests that run them and compare their output with the known correct output for several graphs. Our test cases include examples of Halin graphs, D3-reducible but non-Halin graphs, and non-D3-reducible graphs, on up to 40 vertices. The size of these test graphs was limited by the need to have independent human verification of the correctness of the results rather than by the performance of the algorithms.

Because the implementation of Python that we used is a slow interpreted language, we did not attempt to measure the runtime of our algorithms, as we feel that this measurement would not provide useful information about the efficiency of the same algorithms when implemented in a higher-performance environment.

\section{Conclusions and open problems}

We have developed simple and implementable algorithms for recognizing Halin graphs and for several related problems. These algorithms led us to the definition of a class of graphs, the D3-reducible graphs, that generalize the Halin graphs and share many of their important properties.

It would be of interest to determine more precisely where the D3-reducible graphs fit within the complicated hierarchy of known graph classes. For instance, as well as being a subclass of the polyhedral graphs (which also include the D3-reducible graphs) and the planar partial 3-trees (which don't), the Halin graphs are a subclass of the intersection graphs of rectangles~\cite{ChaFraSur-DM-09}. Is this also true of the  D3-reducible graphs?

\bibliographystyle{abuser}
\bibliography{halin}

\end{document}